
\documentclass{birkmult}
\usepackage{graphicx}
\usepackage{cite}
%
%
%
 \newtheorem{thm}{Theorem}[section]
 
 \newtheorem{lem}[thm]{Lemma}
 \newtheorem{prop}[thm]{Proposition}
 \theoremstyle{definition}
 
 \theoremstyle{remark}
 \newtheorem{rem}[thm]{Remark}
 \newtheorem{rems}[thm]{Remarks}
 
 \newtheorem*{exs}{Examples}
 \numberwithin{equation}{section}

\newcommand{\be}{\begin{equation}}
\newcommand{\ee}{\end{equation}}
\newcommand{\bea}{\begin{eqnarray}}
\newcommand{\eea}{\end{eqnarray}}
\newcommand{\beax}{\begin{eqnarray*}}
\newcommand{\eeax}{\end{eqnarray*}}
\newcommand{\mfr}[2]{{\textstyle\frac{#1}{#2}}}

\def\a{\alpha}
\def\b{\beta}
\def\g{\gamma}
\def\d{\delta}
\def\e{{\rm{e}}}
\def\eps{\varepsilon}
\newcommand{\E}{{\mathord{\mathbb E}}}
\def\i{{\rm{i}}}
\def\s{\sigma}

\def\o{\omega}
\def\D{\Delta}
\def\O{\Omega}
\def\l{\lambda}

\def\C{\mathbb C}
\def\Im{{\rm{Im}}}
\def\Re{{\rm{Re}}}
\def\H{\mathbb H}
\def\N{\mathbb N}
\def\R{\mathbb R}
\def\SH{\mathbb S\mathbb H}
\def\Z{\mathbb Z}
\def\cd{{\rm{cd}}}

\begin{document}
%
%
%
%
%
%
%
%
%
\title[Absolutely continuous spectrum]
 {A geometric approach to absolutely continuous spectrum for discrete 
 Schr\"odinger operators}
\author[Richard Froese]{University of British Columbia}
\address{Department of Mathematics\\
Vancouver, British Columbia, Canada}
\email{rfroese@math.ubc.ca}
\author[David Hasler]{College of William \& Mary}
\address{Department of Mathematics\\
Williamsburg, Virginia, USA}
\email{dghasler@wm.edu}
\author[Wolfgang Spitzer]{FernUniversit\"at Hagen}
\address{%
Fakult\"at f\"ur Mathematik und Informatik\\
Hagen, Germany}
\email{Wolfgang.Spitzer@FernUni-Hagen.de}
\subjclass{82B44}

\keywords{Absolutely continuous spectrum, transfer matrices, hyperbolic 
geometry, tree graphs}

\date{January 19, 2010}

\begin{abstract}
We review a geometric approach to proving absolutely continuous (ac) spectrum
for random and deterministic Schr\"odinger operators developed in
\cite{FHS1,FHS2,FHS3,FHS4}. We study decaying potentials in one dimension and 
present a simplified proof of ac spectrum of the Anderson model on trees. The
latter implies ac spectrum for a percolation model on trees. Finally, we 
introduce certain loop tree models which lead to some interesting open problems.
\end{abstract}

\maketitle

\section{Introduction}

The study of one-particle Schr\"odinger operators of the form $H=-\D + q$ with 
kinetic energy $-\D$ and (random) potential $q$ has caught the attention of many 
reseachers over several decades. As an introduction to this topic we recommend 
the books by Cycon, Froese, Kirsch, Simon\cite{CFKS}, by Stollmann~\cite{Stoll},
and the paper by Kirsch~\cite{Kirsch}. 

In the discrete setting, we choose the kinetic energy to be the negative of 
the adjacency matrix, $\D$, of some graph $\mathcal G$. The most important 
example is the $d$-dimensional regular graph, $\Z^d$. Since there are only very 
few examples of potentials, $q$, where the spectrum of $H$ is known explicitly 
we would be content knowing, for instance, the existence of point and absolutely 
continuous (henceforth ac) spectrum of $H$, the level statistics of eigenvalues 
or the long-time behavior under the Schr\"odinger time evolution. For example, 
from scattering theory it is well-known that if $q$ decays fast enough (that is,
if $q$ is integrable) then the spectrum of $H=-\D + q$ inside the spectrum of 
$-\D$ (on $\Z^d$, this is the interval $[-2d,2d]$) is purely ac and outside 
this interval the spectrum is pure point. 

An important model in solid state physics concerns the case when $q$ is a
random potential. In the simplest scenario we assume that the values $q(v)$ 
and $q(w)$ for two different vertices $v,w\in\Z^d$ are chosen independently 
from an a-priori given probability measure, $\nu$. Let us multiply the 
potential, $q$, by the factor $a>0$ and interprete $a$ as the \textbf{disorder} 
parameter. Anderson discovered in 1958 that for large disorder or at large
energy the spectrum of $H_a=-\D+a\,q$ is pure point. By now there is an extensive 
literature on this phenomenon which is known as Anderson \textbf{localization}.
The proofs are based on the seminal work of Fr\"ohlich and Spencer~\cite{FS} 
and of Aizenman and Molchanov~\cite{AM}. However, there is currently no proof 
of the existence of ac spectrum at small disorder (or \textbf{delocalization}) 
on $\Z^3$. This is considered an outstanding open problem in Mathematical 
Physics, also known as the \textbf{extended states conjecture}. 

One valuable contribution to this conjecture might come from replacing the graph
$\Z^d$ by a simpler graph such as a tree and study there extended states 
(synonymous with ac spectrum) for random potentials. This has indeed been 
achieved first by Klein~\cite{Klein} in 1998 (and later by Aizenman, Sims,
Warzel\cite{ASW1} in 2006) who proved the extended states conjecture on trees. 
Motivated by Klein's result we first constructed novel examples of potentials on a
tree that produce ac spectrum\cite{FHS1}. Then we reproved a variant of Klein's result 
\cite{FHS2}. A simplified version of this proof is presented in 
Section \ref{tree}. In order to move somewhat closer to the
lattice $\Z^d$ we consider a random potential on a tree that is strongly
correlated instead of independently distributed\cite{FHS3}. We prove that for small
correlations (a large part of) the ac spectrum is stable but it is well-known 
that it disappears completely at maximum correlation, see Section 
\ref{strongly correlated}. In Section \ref{loop tree} we present three
models where we add loops to a (binary) tree. It is only the mean-field loop
tree model where we can solve the spectrum of the new Laplacian. On top of this
Laplacian we add a certain random potential and prove stability of a large ac 
component. After a short review of some spectral theory we discuss 
one-dimensional Schr\"odinger operators. We reprove the stability of the ac 
spectrum with respect to an integrable potential, a Mourre estimate, and the 
stability with respect to a square integrable random potential. The proofs
follow from simple geometric properties of the M\"obius transformation (or
transfer matrix) with respect to the Poincar\'e metric, which controls the 
spreading of the Green function in terms of the potential. In Section 
\ref{higher dimension}, these M\"obius transformations are generalized to 
general graphs (including, for instance, $\Z^d$ and trees), and, like in one
dimension, express the Green function as a limit of products of M\"obius 
transformations. 

\section{Setup}\label{setup}

A graph $\mathcal G=(V,E)$ consists here of a countably infinite set $V$ 
called the vertex set. $E\subseteq V\times V$ is called the 
edge set und obeys
\begin{enumerate}
\item[(i)] if $(v,w)\in E$ then $(w,v)\in E$;
\item[(ii)] $\sup_{v\in V} |\{w\in V:(v,w)\in E\}|<\infty$.
\end{enumerate}
$v,w\in V$ are called nearest neighbors if $(v,w)\in E$. 

The most important example is the $d$-dimensional regular lattice, $\Z^d$, but 
we may as well consider the graph with vertex set $V=\N_0^d$ ($d\in\N$) and 
edge set $E=\{(x,y)\in V\times V:\|x-y\|_1:=\sum_{i=1}^d |x_i-y_i|=1\}$. 
Another example of interest is the (rooted) regular tree, $T_k,k\in\N$. Here, 
$V=\bigcup_{n\ge0,0\le j\le k^n-1}\{(n,j)\}\subset \N_0^2$. 
Two vertices $v=(n,j)$ and $w=(m,\ell)$ are nearest neighbors if 
$m=n+1$ and $j\in\mfr1k\{\ell,\ell+1,\ldots, \ell+k-1\}$ or if $n=m+1$ and 
$\ell\in\mfr1k\{j,j+1,\ldots, j+k-1\}$. The vertex $0$ is called the root.

The graph $\mathcal G = (V,E)$ determines the \textbf{adjacency matrix} 
(operator) on $\ell^2(V)$ of the graph $\mathcal G$ with kernel $\D(v,w)$ 
given by
\be \D(v,w):=\left\{\begin{array}{cc}1&\mbox{ if }(v,w)\in E\\0&\mbox{else}
\end{array}\right.\,.
\ee
That is, for $\phi\in\ell^2(V)$,
\be\label{eq1}
(\D \phi)(v):=\sum_{w\in V} \D(v,w) \phi(w) = \sum_{w\in V:(v,w)\in E}
\phi(w)\,,\quad v\in V\,.
\ee
Because of the two conditions (i) and (ii) above on the graph $\mathcal G$, the
adjacency matrix 
$\D$ is a bounded, self-adjoint operator on the Hilbert space $\ell^2(V)$ with 
respect to the standard scalar product $\langle \phi,\psi\rangle:=\sum_{v\in V} 
\bar{\phi}(v) \psi(v)$ for $\phi,\psi\in\ell^2(V)$. With some abuse of
terminology, $\D$ is also called the (discrete) \textbf{Laplace operator} or 
Laplacian.


The total \textbf{energy}, $H:=-\D+q$, of a quantum mechanical particle on the graph 
$\mathcal G$ is described here by the \textbf{kinetic} energy being equal to 
the negative of the adjacency matrix plus a 
\textbf{potential} energy term given in terms of a bounded function $q:V\to\R$. 
We identify $q$ with the multiplication operator on $\ell^2(V)$ by this 
function $q$ and call $H$ a \textbf{Schr\"odinger operator}.
$H$ is then also a bounded, self-adjoint operator on $\ell^2(V)$. 

$\l\in\C$ is in the resolvent set of $H$, if the so-called \textbf{resolvent}, 
$G_\l:=(H-\l)^{-1}$, of $H$ exists and if $G_\l$ is a bounded operator on 
$\ell^2(V)$. The complement, $\s(H)$, of the resolvent set in $\C$ is called 
the \textbf{spectrum} of $H$. Since $H$ is bounded and self-adjoint, $\s(H)$ 
is a closed, bounded subset of $\R$. 

By the Spectral Theorem (cf.~\cite[Theorem VII.6]{RS}), there exists a family 
of orthogonal projections, $P_\O$, on $\ell^2(V)$ indexed by the 
Borel-measurable sets $\O\subseteq\R$ so that 
\be \label{spectral thm} H = \int_\R t\, dP_t
\ee
with $P_t:=P_{(-\infty,t]} = 1_{(-\infty,t]}(H)$, and $1_\O$ being the 
indicator function of $\O$.

The integral on the right-hand side of \eqref{spectral thm} is meant as a 
Lebesgue-Stieltjes integral so that
$$ \langle\phi, H\psi\rangle = \int_\R t \,d\langle\phi,P_t \psi\rangle\,,
\quad\phi,\psi\in\ell^2(V)\,.
$$
By setting $\mu_{\phi,\psi}(\O):=\langle\phi,P_\O\psi\rangle$ we define a 
(complex) Borel measure, $\mu_{\phi,\psi}$, on $\R$, called a spectral measure
(of $H$). 


Let $\l$ be in the \textbf{upper half-plane} $\H:=\{x+\i y:x,y\in\R,y>0\}$ 
(more generally, $\l$ in the resolvent set of $H$). Then, the kernel of the 
resolvent of $H$ (for $v,w\in V$ we set $1_v := 1_{\{v\}},\mu_{v,w}:=
\mu_{1_v,1_w}$),
\be \label{2.4}
G_{\l}(v,w) = (H-\l)^{-1}(v,w) = \langle 1_v,(H-\l)^{-1} 1_w\rangle
=\int_\R \frac{d\mu_{v,w}(t)}{t-\l}\,,
\ee
is called the \textbf{Green function}; the last identity in \eqref{2.4} follows
from the Spectral Theorem (cf.~\cite[Theorem VII.6]{RS}). In other words, the 
Green function, $G_{\l}(v,w)$, is the Borel transform of the spectral measure, 
$\mu_{v,w}$. Note that (by definition) $G_\l(\cdot,w)$ is the unique function 
$\phi\in\ell^2(V)$ satisfying
\be\label{eq9} (H-\l) \phi = 1_w\,,\quad w\in V\,.
\ee
We Lebesgue-decompose (cf.~\cite[Theorem I.14]{RS}) the probability measure 
$\mu_v:=\mu_{v,v}$ with respect to the Lebesgue measure on $\R$ into its unique
absolutely continuous measure, $\mu_{ac,v}$, and singular measure, $\mu_{s,v}$, 
and write
\be
\mu_v  =  \mu_{ac,v} \oplus \mu_{s,v}\,.
\ee
$\l\in\s(H)$ is said to be in the \textbf{absolutely continuous} (ac henceforth)
or singular spectrum of $H$, if for a vertex $v\in V$, $\l\in\mathrm{supp}(
\mu_{ac,v})$, respectively if $\l\in\mathrm{supp}(\mu_{s,v})$. We are here only 
interested in the ac spectrum of $H$, $\s_{ac}(H)$.

We use a sufficient criterion (see \cite[Theorem 4.1]{Klein},
\cite[Theorem 2.1]{SimonLp}) for 
$\l\in\s(H)$ to be in $\s_{ac}(H)$, namely that there exists an interval 
$(c,d)\ni \l$ and a vertex $v\in V$ so that 
\be\label{eq7} 
\limsup_{\eps\downarrow0} \sup_{\l\in(c,d)} \big|G_{\l+\i\eps}(v,v)\big|
\le C \,,
\ee
for some constant $C$; in fact, $(c,d)\cap \s(H)$ is then in $\s_{ac}(H)$. 
This follows from Stone's formula, which says that for $c,d\in\R, c<d$, and 
for all $\phi\in\ell^2(V)$,
\bea\label{eq6}\lefteqn{\lim_{\eps\downarrow0} \mfr1{\pi} \int_c^d  \Im\, 
\langle \phi,G_{\l+\i\eps}\phi\rangle \, d\l}
\nonumber\\ 
&=&\lim_{\eps\downarrow0} \mfr1{2\pi \i} \int_c^d  \langle\phi,\big[
(H-\l-\i\eps)^{-1}-(H-\l+\i\eps)^{-1}\big]\phi\rangle\, d\l\nonumber\\
&=&\mfr12 \langle \phi, (P_{[c,d]} + P_{(c,d)})\phi\rangle\,.
\eea
Consequently, if $f\in L^q([c,d])$ with $q>1$ and $1/q+1/p=1$, then with
$\phi=1_v$, 
\beax \Big|\int_c^d f(\l)\, d\mu_v(\l)\Big|&\le& \|f\|_q  \,
\limsup_{\eps\downarrow0} \Big(\int_c^d \big[\mfr1\pi\Im (G_{\l+\i\eps}(v,v))
\big]^p\Big)^{1/p}\\
&\le& C \|f\|_q\,.
\eeax
Therefore, by duality, $d\mu_v(\l) = g(\l) d\l$ for some $g\in L^p([c,d])$. 
\bigskip

A \textbf{random potential} is a measurable function $q$ from a measure space
$(A,\mathcal A)$ into $\R^V$, where $\R^V$ is equipped with the Borel product 
$\s$-algebra. In the simplest case there is a single 
probability measure $\nu$ on $\R$ which in turn defines a probability measure, 
$\mathbb P$, on $(A,\mathcal A)$ by requiring the following conditions:
\begin{enumerate}
\item[(i)] $\mathbb P [\omega \in A: q(\omega)(v)\in \O] = \nu(\O)$ for all 
Borel set $\O\subseteq\R$ and for all $v\in V$ ($q$ is said to be identically 
distributed);
\item[(ii)] $\mathbb P \big[\omega \in A : \bigcap_{i=1}^N q(\omega)(v_i)\in 
\O_i)\big]  = \prod_{i=1}^N \mathbb P [\omega \in A : q(\omega)(v_i)\in \O_i]$ 
for all $v_i\not=v_j$ if $i\not= j$, for all $N\in\mathbb N$, and all Borel sets 
$\O_i\subseteq\R$ ($q$ is said to be independently distributed).
\end{enumerate}
We will, without loss of generality, always assume that the mean of $\nu$ is zero
and, to simplify matters, that $\nu$ is compactly supported. The random
Schr\"odinger operator $H:=-\D +q$ on $\ell^2(V)$ with iid random potential 
(that is, $q$ satisfying conditions (i) and (ii) above) is called the 
\textbf{Anderson Hamiltonian} (or \textbf{model}) on the graph $\mathcal G$. 

For $\Im(\l)>0$, the random Green function, $G_\l(v,v)$, (the dependence on
$\omega\in A$ is tacitly suppressed) is a random variable on
$\H$ but simply referred to as the Green function. Since the potential is random so is the spectrum of $H=-\D+q$. However, 
Kirsch and Martinelli~\cite{KM} proved under some (ergodicity) conditions 
on the graph $(V,E)$ --- which are basically\footnote{If we wanted  
ergodicity to be satisfied we should switch from the rooted graphs $\N_0^d$ and 
$T_k$ to $\Z^d$, respectively the unrooted tree. But as much as the ac
spectrum is concerned there is no difference and we stick with the rooted
graphs.} satisfied for $\N_0^d$ and $T_k$ --- 
that the set $\s_{ac}(H)$ is $\mathbb P$-almost surely equal to one 
specific set. Most of the time, the probability measure, $\mathbb P$, is not
mentioned explicitly.

Let $\rho_{\lambda,v}$ be the probability distribution of $G_\l(v,v)$, that is, 
$\rho_{\lambda,v}(A):=\mathrm{Prob}[$ $G_\l(v,v)\in A]$ for a Borel subset 
$A\subseteq\mathbb H$. In order to prove ac spectrum of $H$ we show, loosly 
speaking, that the support of $\rho_{\l,v}$ does not leak 
out to the boundary of $\H$ as $\Im(\l)\downarrow0$ but that the support stays 
inside $\H$. More precisely, for a suitably chosen weight function\footnote{$w$
satisfies $\Im(z)\le C w(z)$ for $z$ near the boundary of $\mathbb H$ with some
constant $C$, 
see \cite[(5)]{FHS3}.} $w$ on $\mathbb H$ (later denoted by $\mathrm{cd}$),
a suitably chosen interval $(c,d)$, and some $p>1$ we shall prove that 
(see \cite[Lemma 1]{FHS2})
\begin{equation}
\limsup_{\eps\downarrow0} \sup_{\l\in(c,d)} \int_{\mathbb H}
w(z)^p d\rho_{\l+\i\eps,v}(z)<\infty\,.
\end{equation}

Let us scale the potential $q$ by the so-called \textbf{disorder}
parameter $a\ge0$ and define $H_a:=-\D+a\,q$. A version of the 
\textbf{extended states conjecture} on a graph $\mathcal G$ can now be formulated as 
the property whether for a probability measure $\nu$ on $\R$ and random 
potential $q$ defined through $\nu$ (obeying the above conditions) and for 
small coupling $a>0$, the ac spectrum of $H_a$ is $\mathbb P$-almost surely 
non-empty, possibly equal to $\s(-\D)$. It is widely believed that this 
conjecture is true on $\N_0^d$ for $d\ge3$ but it is well-known not to be true in 
dimension one. We present a proof of the extended states conjecture on the binary
tree in Section \ref{tree}.
%
%
%
%

\section{One-dimensional graph, $\N_0$}\label{sec:one-dim}

We recall here the standard method of transfer matrices and prove some simple
geometric properties. This is applied to reproving some known results about
decaying potentials. 

Our goal is to bound the diagonal Green function, $G_\l(v,v)$, for $\l\in\H$ 
as $\Im(\l)\downarrow0$ as in \eqref{eq7}. For the sake of simplicity, let us 
take $v=0$. Let $\phi=(\phi_0,\phi_1,\ldots)$ with $\phi_n:=G_\l(0,n)$. 
By recalling (\ref{eq9}), $\phi$ satisfies
\be (-\D+q-\l)\phi = 1_0\,.
\ee
This is equivalent to the system of equations
\bea\label{eq10} -\phi_1 + (q_0-\l) \phi_0 -1&=&0\,,\\
-\phi_{n+1} + (q_n-\l) \phi_n -\phi_{n-1}&=&0\,,\quad n\ge1\,.
\nonumber\eea
Let
\be A_n:=\left[\!\!\begin{array}{cc}q_n-\l&-1\\1&0\end{array}\!\!\right]\,,\quad
n\in\N_0\,.
\ee
$A_n$ is called a \textbf{transfer matrix}. Clearly, $A_n\in \mathrm{SL}(2,\C)$, 
that is, $\det(A_n)=1$. $\phi$ satisfies (\ref{eq10}) if and only if for all $n\ge0$,
\be \label{eq11}
\left[\!\!\begin{array}{c}\phi_{n+1}\\\phi_n\end{array}\!\!\right] = 
A_n A_{n-1} \cdots A_0 \left[\!\!\begin{array}{c}\phi_{0}\\1\end{array}
\!\!\right]\,.
\ee
There is a unique choice of $\phi_0\in \C$, namely $G_\l(0,0)$, so that 
$\phi_n$, computed from (\ref{eq11}), yields a vector $\phi\in\ell^2(\N_0)$. 
An equivalent formulation of (\ref{eq11}) is 
\be \label{eq12}
\left[\!\!\begin{array}{c}\phi_{0}\\1\end{array}\!\!\right] = 
A_0^{-1} A_{1}^{-1} \cdots A_n^{-1} \left[\!\!\begin{array}{c}\phi_{n+1}\\\phi_n
\end{array}\!\!\right]\,.
\ee
Here we compute $\phi_0$ from the likewise unknown vector 
$[\phi_{n+1},\phi_{n}]^T$. Nevertheless,
there is a big difference between (\ref{eq11}) and (\ref{eq12}) when it comes 
to computing $\phi_0$.

As an example let us consider the case without a potential, that is, with
$q=0$. Since $\l\in\H$, the matrix $A_i=\left[\!\!\begin{array}{cc}-\l
&-1\\1&0\end{array}\!\!\right]$ has an eigenvalue $\mu_1$ with $|\mu_1|<1$ and
$\Im(\mu_1)>0$, and another eigenvalue $\mu_2$ with $|\mu_2|=1/|\mu_1|>1$ and 
$\Im(\mu_2)<0$. For $\phi\in\ell^2(\N_0)$ we have to choose $\phi_0$ so that 
$[\phi_{0},1]^T$ is an eigenvector
to $\mu_1$. Therefore, the left-hand side of (\ref{eq11}), namely the vector 
$[\phi_{n+1},\phi_n]^T$ is very sensitive to the choice of the input vector 
$[\phi_{0},1]^T$.
In contrast, the left-hand side of (\ref{eq12}) (for large $n$) is quite
insensitive to the choice of the input vector $[\phi_{n+1},\phi_n]^T$.
Here, the large $n$ behavior is dominated by the large eigenvalue $\mu_2$, and 
$[\phi_{n+1},\phi_n]^T$ must not lie in the eigenspace to the eigenvalue 
$\mu_1$.

\medskip
It is convenient to rewrite the system of equations (\ref{eq12}), and define 
for $\phi = (\phi_n)_{n\in\N_0}$ the sequence $\a=(\a_n)_{n\in\N_0}$ with
\be\label{eq13} \a_{n}:=\frac{\phi_n}{\phi_{n-1}}\,,\quad \phi_{-1}:=1\,.
\ee
Note that for $\l\in\H$, $\phi_n\not=0$: For otherwise, $\l\in\H$ would be an
eigenvalue with eigenfunction $\phi$ of the self-adjoint operator $H$ restricted
to $\{n,n+1,\ldots\}$ (with Dirichlet boundary condition at $n$).

Let $\Phi_n:\H\to\H$ be the \textbf{M\"obius transformation} associated with the 
transfer matrix $A_n^{-1}$. That is,
\be\label{eq14}
\Phi_n(z):=-\frac1{z+\l-q_n}\,.
\ee
Then (\ref{eq12}) is equivalent to 
\be\label{eq15}
\phi_0 = \Phi_0\circ\Phi_1\circ\cdots\circ \Phi_{n}(\a_{n+1})\,\,.
\ee
The numbers $\a_n$ can be interpreted as the Green function of the graph $\N_0$
truncated at $n$. To this end, let $\N_n:=\{n,n+1,\dots\}$ and
$E_n:=\{(k,k+1),(k+1,k),k\ge n\}$. If $\D_n^{(t)}$ denotes the adjacency matrix 
for the truncated graph $(\N_n,E_n)$, then
\be \label{def:alpha}
\a_n = G_\l^{(t)}(n,n):=(-\D_n^{(t)}+q-\l)^{-1}(n,n)\,,\quad n\in\N_0\,,
\ee
and we have the recursion
\be \label{recursion}
\a_n = \Phi_n(\a_{n+1})\,,\quad n\in\N_0\,.
\ee
This can be seen from the above equations but we will rederive this later, see 
formula (\ref{eq22}). 


We equip the upper half-plane $\H$ with the hyperbolic
(or Poincar\'e) metric $\mathrm{d}$, that is,
\be \label{def:metric}
\mathrm{d}(z_1,z_2):=\cosh^{-1}\Big(1+\mfr12\,\frac{|z_1-z_2|^2}{\Im(z_1) 
\Im(z_2)}\Big)\,,\quad z_1,z_2\in\H\,,
\ee
or alternatively with the Riemannian line element (see also \eqref{4.10} and
\eqref{def:Finsler}),
\be\label{dsRiem} ds = \frac{\sqrt{dx^2+dy^2}}{y}\,,\quad z=x+\i y\in\H\,.
\ee

\begin{prop}[\!\!\cite{FHS1}, Proposition 2.1]\label{prop1}  \begin{enumerate}
\item[(i)] For $\Im(\l)\ge0$, $\Phi_n$ is a hyperbolic contraction on 
$(\H,\mathrm{d})$, that is, for $z_1,z_2\in\H$,
$$ \mathrm{d}(\Phi_n(z_1),\Phi_n(z_2)) \le \mathrm{d}(z_1,z_2)\,.
$$
\item[(ii)] For $\Im(\l)>0$, $\Phi_n(\H)\subset \{z\in\H:|z|<1/\Im(\l)\}$. 
Furthermore, $\Phi_n$ is a strict hyperbolic
contraction. That is, for $z_1,z_2\in\H$ with
$\max\{|z_1|,|z_2|\}<C$ there exists a constant $\d<1$, e.g.,
$\d:=C/(C+\Im(\l))$, depending on $\Im(\l)$ and $C$ so that
$$ \mathrm{d}(\Phi_n(z_1),\Phi_n(z_2)) \le \d \,\mathrm{d}(z_1,z_2)\,.
$$
\end{enumerate}
\end{prop}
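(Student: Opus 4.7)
The plan is to push everything through the explicit form of $\Phi_n$. For part~(i), a direct computation gives
\[
|\Phi_n(z_1)-\Phi_n(z_2)| = \frac{|z_1-z_2|}{|z_1+\l-q_n|\,|z_2+\l-q_n|}\,,\qquad \Im(\Phi_n(z)) = \frac{\Im(z)+\Im(\l)}{|z+\l-q_n|^2}\,,
\]
so that
\[
\frac{|\Phi_n(z_1)-\Phi_n(z_2)|^2}{\Im(\Phi_n(z_1))\,\Im(\Phi_n(z_2))} = \frac{|z_1-z_2|^2}{(\Im(z_1)+\Im(\l))(\Im(z_2)+\Im(\l))}\,.
\]
For $\Im(\l)\ge 0$ the right-hand side is at most $|z_1-z_2|^2/(\Im(z_1)\,\Im(z_2))$, and the monotonicity of $\cosh^{-1}$ in (\ref{def:metric}) yields part~(i) at once.

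For (ii), the containment $\Phi_n(\H)\subset\{|z|<1/\Im(\l)\}$ is immediate from $|\Phi_n(z)|=1/|z+\l-q_n|\le 1/(\Im(z)+\Im(\l))<1/\Im(\l)$. For the strict contraction, I would switch to the Riemannian picture (\ref{dsRiem}): holomorphicity of $\Phi_n$ together with the identities above gives the hyperbolic infinitesimal contraction factor
\[
\frac{|\Phi_n'(z)|\,\Im(z)}{\Im(\Phi_n(z))} = \frac{\Im(z)}{\Im(z)+\Im(\l)}\,,
\]
which is strictly increasing in $\Im(z)$. On any curve along which $\Im(z)\le C$ this factor is therefore at most $\delta = C/(C+\Im(\l))$, so integrating against $ds$ gives $\mathrm{d}(\Phi_n(z_1),\Phi_n(z_2))\le\delta\,\mathrm{d}(z_1,z_2)$ provided the hyperbolic geodesic from $z_1$ to $z_2$ stays in $\{\Im(z)\le C\}$.

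The one non-routine step, which I expect to be the main obstacle, is verifying that the geodesic arc from $z_1$ to $z_2$ indeed stays inside $\{|z|\le\max(|z_1|,|z_2|)\}$, so that $\Im(z)\le|z|<C$ along it. For a vertical geodesic this is clear; for a semicircular geodesic $|z-c|^2=r^2$ one computes $|z|^2 = r^2-c^2+2c\,\Re(z)$, which is linear in $\Re(z)$ and therefore attains its maximum on the arc from $z_1$ to $z_2$ at an endpoint. With this small geometric fact in place, (ii) is complete. It is worth noting that one cannot bypass the line-integral argument by working from the metric identity of part~(i) alone, since the naive inequality $\cosh^{-1}(1+\delta^2 u/2)\le\delta\cosh^{-1}(1+u/2)$ fails at large hyperbolic distances; it is exactly the infinitesimal picture that delivers the claimed constant $\delta$.
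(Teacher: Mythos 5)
Your proof is correct, and it is worth recording how it relates to the paper's own argument. For part (i) you are doing essentially what the paper does: your displayed identity for $|\Phi_n(z_1)-\Phi_n(z_2)|^2/\bigl(\Im(\Phi_n(z_1))\,\Im(\Phi_n(z_2))\bigr)$ is exactly the content of the paper's factorization $\Phi_n=\rho\circ\tau_n$ with the hyperbolic isometry $\rho(z)=-1/z$ and the translation $\tau_n(z)=z+\l-q_n$, read off from \eqref{def:metric}. For part (ii) the paper only asserts that the claim ``follows from straightforward calculations'' (deferring to \cite{FHS1}), whereas you supply a complete route: the containment is the same one-line estimate $|\Phi_n(z)|\le 1/(\Im(z)+\Im(\l))$, and the strict contraction comes from the Schwarz--Pick-type infinitesimal factor $|\Phi_n'(z)|\,\Im(z)/\Im(\Phi_n(z))=\Im(z)/(\Im(z)+\Im(\l))$, integrated along the hyperbolic geodesic, together with the elementary observation that on a geodesic semicircle $|z|^2=r^2-c^2+2c\,\Re(z)$ is affine in the monotone quantity $\Re(z)$ and hence maximal at an endpoint, so the geodesic stays in $\{|z|<C\}$. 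That containment lemma is exactly what makes the hypothesis $\max\{|z_1|,|z_2|\}<C$ (rather than a bound on the endpoints' imaginary parts) the right one, since the top of the geodesic can have much larger imaginary part than either endpoint. Your closing remark is also accurate: the identity from (i) only yields $\cosh\mathrm{d}(\Phi_n(z_1),\Phi_n(z_2))-1\le\delta^2\bigl(\cosh\mathrm{d}(z_1,z_2)-1\bigr)$, and since $\cosh^{-1}(1+\delta^2u/2)>\delta\,\cosh^{-1}(1+u/2)$ for $u>0$, that alone gives only a distance-dependent contraction factor tending to $1$ at large separation, not the stated constant $\delta=C/(C+\Im(\l))$; your infinitesimal argument is what delivers the constant claimed in the statement, and it fills in precisely the step the paper leaves implicit.
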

The basic idea is to factor $\Phi_n = \rho\circ\tau_n$ into the rotation 
(around the point $\i$ and angle $\pi$) $\rho:z\mapsto -1/z$ and the translation
$\tau_n:z\mapsto z+\l-q_n$. $\rho$ is a hyperbolic isometry. If $\Im(\l)>0$ then
$\tau_n$ is a strict hyperbolic contraction in the sense that $\mathrm{d}(\tau_n
(z_1),\tau_n(z_2)) < \mathrm{d}(z_1,z_2)$ as can be seen directly from definition 
\eqref{def:metric}. If $\Im(\l)=0$, then also $\tau_n$ is an isometry. The 
properties claimed in (i) and (ii) follow from straightforward calculations.  
$\quad\Box$

If $\Im(\l)>0$ then $\Phi_n$ shifts the upper half-plane upwards. Even more 
so (recall that the potential $q$ is bounded) we have
\begin{prop}[\!\!\cite{FHS1}, Proposition 2.2]\label{prop2} For $\Im(\l)>0$ 
there exists a hyperbolic disk $B\subset\H$ so that $\Phi_{n-1}\circ\Phi_{n}
(\H)\subset B$.
\end{prop}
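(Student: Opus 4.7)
The plan is to produce explicit constants showing that the set $\Phi_{n-1}\circ\Phi_n(\H)$ sits in a compact subset of $\H$ (compact in the Euclidean sense and bounded away from the real axis), which then automatically gives bounded hyperbolic diameter and hence containment in a hyperbolic disk $B$.

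First I would observe, using Proposition \ref{prop1}(ii), that $\Phi_n(\H)\subset D:=\{w\in\H:|w|<1/\Im(\l)\}$. This already controls the Euclidean modulus of any point $w\in\Phi_n(\H)$, but it does \emph{not} bound the hyperbolic diameter, since $\Im(\Phi_n(z))=(\Im(z)+\Im(\l))/|z+\l-q_n|^2$ can tend to $0$ as $|z|\to\infty$. So a single step is insufficient, and the key is that the second application of $\Phi_{n-1}$ pushes everything up away from the real axis.

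The main computation would then be: for $w\in D$,
\[
\Im\bigl(\Phi_{n-1}(w)\bigr)=\frac{\Im(w)+\Im(\l)}{|w+\l-q_{n-1}|^{2}}\geq \frac{\Im(\l)}{\bigl(1/\Im(\l)+|\l|+\|q\|_{\infty}\bigr)^{2}}=:c_{1}>0,
\]
where I use $\Im(w)\ge 0$ in the numerator, boundedness of $|w|$ by $1/\Im(\l)$, and boundedness of $q$ in the denominator. Combined with the Euclidean bound $|\Phi_{n-1}(w)|\le 1/\Im(\l)=:c_{2}$ (again from Proposition \ref{prop1}(ii), applied this time to $\Phi_{n-1}$), I conclude
\[
\Phi_{n-1}\circ\Phi_n(\H)\subset K:=\bigl\{\zeta\in\H:\Im(\zeta)\ge c_1,\ |\zeta|\le c_2\bigr\}.
\]

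Finally, $K$ is a compact subset of the open upper half-plane. From the definition \eqref{def:metric} of the hyperbolic metric, any such compact set has finite hyperbolic diameter, so $K$ (and therefore $\Phi_{n-1}\circ\Phi_n(\H)$) is contained in a hyperbolic ball $B$ of finite radius centered, say, at $\i$. I do not expect any real obstacle here; the only subtlety is recognizing that one step of the recursion does not suffice to bound the hyperbolic diameter and that the second step is what converts the Euclidean bound from Proposition \ref{prop1}(ii) into a positive lower bound on the imaginary part via the explicit formula $\Im(-1/u)=\Im(u)/|u|^2$.
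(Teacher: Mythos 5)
Your proof is correct and follows essentially the route the paper (via \cite{FHS1}) intends: one application of $\Phi_n$ gives the Euclidean bound $|w|<1/\Im(\l)$ from Proposition \ref{prop1}(ii), and the second application converts this, via $\Im(-1/u)=\Im(u)/|u|^2$ together with the boundedness of $q$, into a lower bound on the imaginary part, so the image lies in a set of finite hyperbolic diameter. Note also that your constants $c_1,c_2$ depend only on $\Im(\l)$, $|\l|$ and $\|q\|_\infty$, which gives the uniformity in $n$ (and in $q$ ranging over a compact set) that is actually used later in the proofs of Theorems \ref{thm3} and \ref{thm5}.
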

This allows us to state precisely our claim about the stability of our way to 
compute the Green function.
\begin{thm}[\!\!\cite{FHS1}, Theorem 2.3]\label{thm3} Let $\Im(\l)>0$ and let 
$(\g_n)_{n\in\N}$ be an \emph{arbitrary} sequence in $\H$. Then we have
\be
\lim_{n\to\infty} \Phi_0\circ\Phi_1\circ\cdots\circ \Phi_n(\g_n) = \phi_0 =
G_\l(0,0)\,. 
\ee
\end{thm}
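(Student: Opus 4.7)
The plan is to reduce Theorem \ref{thm3} to the combination of Propositions \ref{prop1} and \ref{prop2}, by comparing the iterates starting from the arbitrary sequence $(\g_n)$ with the canonical iterates starting from the truncated Green functions $\a_{n+1} = G_\l^{(t)}(n+1,n+1)$, for which (\ref{eq15}) already gives the identity $\Phi_0\circ\Phi_1\circ\cdots\circ\Phi_n(\a_{n+1}) = \phi_0$. Setting $z_n := \Phi_0\circ\Phi_1\circ\cdots\circ\Phi_n(\g_n)$, it therefore suffices to show $\mathrm{d}(z_n,\phi_0)\to 0$, since convergence in the hyperbolic metric to an interior point $\phi_0\in\H$ implies convergence in the usual (Euclidean) topology.

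Next I would peel off the innermost two M\"obius maps using Proposition \ref{prop2}: both $u_n := \Phi_{n-1}\circ\Phi_n(\g_n)$ and $v_n := \Phi_{n-1}\circ\Phi_n(\a_{n+1})$ lie in the single hyperbolic disk $B\subset\H$ guaranteed by that proposition, so
$$\mathrm{d}(u_n,v_n) \le D := \mathrm{diam}_{\mathrm{d}}(B) < \infty.$$
Since $B$ is hyperbolically bounded, its Euclidean diameter is finite as well, and by Proposition \ref{prop1}(ii) every subsequent image $\Phi_k(B)$ and, inductively, $\Phi_k\circ\cdots\circ\Phi_{n-2}(B)$ is contained in the bounded set $\{|z|<1/\Im(\l)\}$. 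Hence Proposition \ref{prop1}(ii) applies with a single constant $C = 1/\Im(\l)$, producing a uniform contraction factor $\d = \d(\Im(\l)) < 1$ at every application of $\Phi_0,\ldots,\Phi_{n-2}$. Iterating gives
$$\mathrm{d}(z_n,\phi_0) = \mathrm{d}\bigl(\Phi_0\circ\cdots\circ\Phi_{n-2}(u_n), \Phi_0\circ\cdots\circ\Phi_{n-2}(v_n)\bigr) \le \d^{\,n-1}\, D,$$
which tends to $0$ as $n\to\infty$, proving the theorem.

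The main obstacle is to secure a contraction factor that is genuinely uniform in $n$; a pointwise strict contraction would not be enough, because the hyperbolic distance between $u_n$ and $v_n$ is only bounded rather than small. This is precisely what Proposition \ref{prop2} is designed for: after two compositions the iterates are trapped in a fixed hyperbolic disk $B$, and then the forward invariance of $\{|z|<1/\Im(\l)\}$ under each $\Phi_k$ guarantees that the contraction factor $\d$ from Proposition \ref{prop1}(ii) can be chosen depending only on $\Im(\l)$, independently of $n$ and of the particular potential values $q_n$. Everything else is a direct, routine iteration of the contraction inequality.
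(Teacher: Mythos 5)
Your proof is correct and uses essentially the same mechanism as the paper: trap the two innermost images in the fixed hyperbolic disk $B$ of Proposition \ref{prop2}, then iterate the uniform strict contraction of Proposition \ref{prop1}(ii) (valid since all subsequent images lie in $\{|z|<1/\Im(\l)\}$), and identify the limit via \eqref{eq15}. The only difference is organizational: you compare directly with the canonical sequence $\a_{n+1}$, whose images are identically $\phi_0$, whereas the paper first shows $(w_n)$ is Cauchy and then proves independence of the starting sequence before invoking \eqref{eq15} — your version is a slight streamlining of the same argument.
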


\begin{proof} Set $w_n:=\Phi_0\circ\Phi_1\circ\cdots\circ \Phi_n(\g_n)$. Let
$B$ be a disk as in Proposition \ref{prop2}, and let $\b:=\Phi_{n-1}\circ
\Phi_n(\g_n)$. Then $\b\in B$. The same is true for $\b':=\Phi_{n-1}\circ
\Phi_n\circ\Phi_{n+1}(\g_{n+1})$. All further images $\Phi_k(\b)$ and 
$\Phi_k(\b')$ stay in $B$ and the conditions from Proposition \ref{prop1}(ii) 
are fulfilled. Hence we have 
\beax \mathrm{d}(w_{n+1},w_n)
&=&\mathrm{d}(\Phi_0\circ\cdots\circ\Phi_{n-2}(\b'),\Phi_0\circ\cdots\circ
\Phi_{n-2}(\b))
\\
&\le&\d\,\mathrm{d}(\Phi_1\circ\cdots\circ\Phi_{n-2}(\b'),\Phi_1\circ\cdots\circ
\Phi_{n-2}(\b))
\\
&\le&\d^{n-1}\mathrm{d}(\b',\b)
\\
&=&C\,\d^n\,.
\eeax
$(w_n)_{n\in\N}$ is therefore a Cauchy sequence and converges to some 
$w\in\H$. Let $(\g'_n)_{n\in\N}$ be another sequence in $\H$. Then we have
analoguously
\beax\lefteqn{\mathrm{d}(\Phi_0\circ\cdots\circ\Phi_{n-1}\circ\Phi_n(\g_n),
\Phi_0\circ\cdots\circ\Phi_{n-1}\circ\Phi_n(\g'_n))}\hspace{1cm}
\\
&\le& C \,\d^{n-1} \mathrm{d}(\Phi_{n-1}\circ\Phi_n(\g_n),\Phi_{n-1}\circ
\Phi_n(\g'_n))
\\
&\le& C\,\d^{n-1}\,.
\eeax
Therefore also $\Phi_0\circ\cdots\circ\Phi_{n-1}\circ\Phi_n(\g'_n)$ converges to
$w$ as $n\to\infty$. Because of (\ref{eq15}), $w=\phi_0=G_\l(0,0)$. 
\end{proof}

\begin{prop}[\!\!\cite{FHS1}, Lemma 4.5]\label{prop4} Let $K$ be a compact 
subset of $\C$ whose elements have non-negative imaginary parts. For every 
$\l\in K$, let $(z_n(\l))_{n\in\N}$ be a sequence in $\H$. Suppose that there 
exist constants $C_1,C_2$ so that
\be\label{eq16}
\sum_{n\ge1}\mathrm{d}\big(\Phi_{n+1}(z_{n+1}(\l)),z_n(\l)\big)\le C_1 
\ee
and
\be\label{eq17}
\mathrm{d}\big(z_1(\l),\i\big) \le C_2
\ee
for all $\l\in K$. Then there exists a constant $C_3$ so that for all
$\l\in K$
\be\label{eq18}
\mathrm{d}\big(G_\l(0,0),\i\big) \le C_3\,.
\ee
\end{prop}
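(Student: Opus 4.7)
The plan is to exhibit a sequence in $\H$ that converges to $G_\l(0,0)$ by Theorem \ref{thm3} and whose consecutive hyperbolic distances are, up to contractivity, exactly the summands controlled by \eqref{eq16}. Concretely I would set
\[
w_n(\l) := \Phi_0\circ\Phi_1\circ\cdots\circ\Phi_n\bigl(z_n(\l)\bigr), \qquad n\ge 1,
\]
so that, with $\g_n:=z_n(\l)\in\H$, Theorem \ref{thm3} gives $w_n(\l)\to G_\l(0,0)$ as $n\to\infty$.

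The telescoping step is where Proposition \ref{prop1}(i) does the work. For $n\ge 2$, both $w_n(\l)$ and $w_{n-1}(\l)$ are images of the common composition $\Phi_0\circ\cdots\circ\Phi_{n-1}$ applied to $\Phi_n(z_n(\l))$ and $z_{n-1}(\l)$, respectively. Since each $\Phi_k$ is a hyperbolic contraction whenever $\Im(\l)\ge 0$,
\[
\mathrm{d}\bigl(w_n(\l),w_{n-1}(\l)\bigr) \le \mathrm{d}\bigl(\Phi_n(z_n(\l)),z_{n-1}(\l)\bigr),
\]
and summing over $n\ge 2$, after the reindex $k=n-1$, yields $\mathrm{d}(w_n(\l),w_1(\l))\le C_1$ for all $n$ and all $\l\in K$, by \eqref{eq16}.

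It remains to control the base point $w_1(\l)=\Phi_0\circ\Phi_1(z_1(\l))$. The triangle inequality together with one more use of contractivity gives
\[
\mathrm{d}(w_1(\l),\i) \le \mathrm{d}(z_1(\l),\i) + \mathrm{d}\bigl(\Phi_0\circ\Phi_1(\i),\i\bigr),
\]
and \eqref{eq17} bounds the first summand by $C_2$. For the second, the inequality $|\i+\l-q_k|\ge 1+\Im(\l)\ge 1$ together with boundedness of $q$ and compactness of $K$ shows that $\Phi_1(\i)$ and $\Phi_0\circ\Phi_1(\i)$ depend continuously on $\l$ and stay in a fixed compact subset of $\H$, so $\mathrm{d}(\Phi_0\circ\Phi_1(\i),\i)\le C_4$ uniformly in $\l\in K$. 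Combining, $\mathrm{d}(w_n(\l),\i)\le C_1+C_2+C_4$ for all $n$; passing $n\to\infty$ and using continuity of the hyperbolic distance produces $\mathrm{d}(G_\l(0,0),\i)\le C_3:=C_1+C_2+C_4$.

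The main subtlety I expect is getting the indexing right: one must write $w_n$ as $n+1$ M\"obius maps applied to $z_n$ so that the telescoping increment $\mathrm{d}(\Phi_n(z_n),z_{n-1})$ aligns, after the shift $k=n-1$, with the summable quantity $\mathrm{d}(\Phi_{k+1}(z_{k+1}),z_k)$ that appears in hypothesis \eqref{eq16}. Everything else is a routine consequence of contractivity, compactness of $K$, and boundedness of $q$.
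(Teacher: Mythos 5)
Your proof is correct and follows essentially the same route as the paper's: telescoping the hyperbolic distance using the contraction property of the $\Phi_k$ so that the increments become the summable quantities in \eqref{eq16}, then invoking Theorem \ref{thm3} to identify the limit with $G_\l(0,0)$. The only (cosmetic) differences are that you pass to the limit $n\to\infty$ via continuity of $\mathrm{d}$ rather than fixing one $n$ with error at most $1$, and you spell out the uniform bound on the base term $\mathrm{d}\big(\Phi_0\circ\Phi_1(\i),\i\big)$, which the paper leaves implicit.
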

Potentials for which we can find such sequences $(z_n(\l))_{n\in\N}$ yield ac 
spectrum for $\l\in \Re(K)$, and pure ac spectrum for $\l\in\mathrm{int}
(\Re(K))$, the interior of the real part of $K$.
\begin{proof} Because of Theorem \ref{thm3} there exists an $n\in\N$ so that
$\mathrm{d}\big(G_\l(0,0),\Phi_0\circ\cdots\circ\Phi_{n}(z_n)\big)\le 1$.
Then using the triangle inequality for the Poincar\'e metric $\mathrm{d}$ 
and the contraction property of $\Phi_n$ we get (suppressing the dependence of
$z_n$ on $\l$),
\beax\mathrm{d}\big(G_\l(0,0),\i\big)
&\le&\mathrm{d}\big(G_\l(0,0),\Phi_0\circ\cdots\circ\Phi_{n}(z_n)\big) + 
\mathrm{d}\big(\Phi_0\circ\cdots\circ\Phi_{n}(z_n),\i\big)
\\
&\le&\mathrm{d}\big(\Phi_0\circ\cdots\circ\Phi_{n-1}(\Phi_{n}(z_n)),
\Phi_0\circ\cdots\circ\Phi_{n-1}(z_{n-1})\big) 
\\
&& + \;\mathrm{d}\big(\Phi_0\circ\cdots\circ\Phi_{n-1}(z_{n-1}),\i\big) + 1
\\
&\le&\mathrm{d}\big(\Phi_{n}(z_{n}),z_{n-1}\big) + 
\mathrm{d}\big(\Phi_0\circ\cdots\circ\Phi_{n-1}(z_{n-1}),\i\big) + 1
\\
&\le&\ldots
\\
&\le&\sum_{k=1}^{n-1} \mathrm{d}\big(\Phi_{k+1}(z_{k+1}),z_{k}\big)
+ \mathrm{d}\big(\Phi_0(z_1),\i\big) + 1
\\
&\le&C_1 + \mathrm{d}\big(\Phi_0(z_1),\i\big) + 1:=C_3\,.
\eeax
\vskip-2em
\end{proof}

\begin{exs} \begin{enumerate}
\item[(i)] Zero potential: Here, $\Phi_n(z) = -\frac1{z+\l}$. For $\l\in\H$, 
let $z_{+}(\l)\in\H$ be the fixed point of $\Phi_n$, that is,
\be\label{fpglg}
z_{+}(\l) = -\frac1{z_{+}(\l)+\l}\,.
\ee
Using Theorem \ref{thm3} with $\g_n=z_{+}(\l)$ we get $\phi_0=G_\l(0,0)=
z_{+}(\l)$. We have
\be z_{+}(\l) = -\l/2+ \i\sqrt{1-\l^2/4} \,.
\ee
$z_{-}(\l):=-\l/2- \i\sqrt{1-\l^2/4}$ is the second solution to the fixed point
equation (\ref{fpglg}), but it lies in the lower half-plane. $z_{\pm}(\l)$ are 
also the two eigenvalues of the transfer matrix. $z_+(\l)$ and $z_-(\l)$ are 
the stable respectively unstable eigenvalue of this matrix. For $\l\in\R$, 
$z_+(\l)\in\H$ if and only if $|\l|<2$. Therefore, $\s(-\D)=\s_{ac}(-\D)=
[-2,2]$. 


\item[(ii)] Short-range potential $q$, that is, $\sum_n |q_n|<\infty$: We 
choose the constant sequence $(z_n)_{n\in\N}$ with $z_n:=z_{+}(\l)$ for 
$n\in\N$. Then we have
\beax \mathrm{d}(\Phi_{n}(z_{n}),z_{n})
&=&\mathrm{d}\Big(z_{n}+\l-q_n,-\frac1{z_{n}}\Big)
\\
&=&\mathrm{d}\big(\l/2+ \i\sqrt{1-\l^2/4}-q_n,\l/2+ \i\sqrt{1-\l^2/4}\big)
\\
&\le&C \,|q_n|\,.
\eeax
By Proposition \ref{prop4}, $[-2,2]\subseteq \s_{ac}(-\D + q)$, and on 
$(-2,2)$ the spectrum is purely ac. 

\item[(iii)] A Mourre estimate: Suppose that $\sum_{n\ge1} |q_{n+1}-q_n| <
\infty$. Choose now $z_n$ for $n\ge k$ to be the fixed point of the map 
$\Phi_n$. Then $z_n = -(\l-q_n)/2 + \i \sqrt{1-(\l-q_n)^2/4}$. $z_n\in\H$ if 
$|\l-q_\infty|<2$ with $q_\infty:=\lim_{n\to\infty} q_n$ and $k$ large enough. 
For $1\le n< k$ choose arbitrary points in $\H$. Then we have
$$\mathrm{d}(\Phi_{n+1}(z_{n+1}),z_n)=\mathrm{d}(z_{n+1},z_n) \le C \,
|q_{n+1}-q_n|\,.
$$
By Proposition \ref{prop4}, $[-2+q_\infty,2+q_\infty]\subseteq\s_{ac}
(-\D + q)$. Note, for instance, that by this Mourre estimate, a monoton
potential decaying to zero always has pure ac spectrum inside $(-2,2)$.

\end{enumerate}
\end{exs}

By allowing the potential to be random, the decay conditions on the potential
can be weakened to guarantee ac spectrum. In one dimension, the
$\ell^1$-condition can then be replaced by an $\ell^2$-condition.

\begin{thm}[\!\!\cite{FHS4}, Theorem 1]\label{thm5} 
Let $q=(q_n)_{n\in\N_0}$ be a family of centered, independent, real-valued 
random variables with corresponding probability measures $\nu_n$ and all with support in 
some compact set $K$. Suppose that $\E[\sum_{n\ge0}|q_n|^2]< \infty$, where 
$\E$ is the expectation with respect to the product measure, 
$\nu=\bigotimes_{n\ge0}\nu_n$. Then almost surely, $[-2,2]$ is part of the ac 
spectrum of $H$, and $H$ is purely ac on $(-2,2)$.
\end{thm}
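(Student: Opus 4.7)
The plan is to adapt Proposition \ref{prop4} to the random setting. The short-range argument of Example (ii) used $z_n=z_+(\lambda)$ to bound $\mathrm{d}(\Phi_{n+1}(z_{n+1}),z_n)$ by $C|q_{n+1}|$ and closed via $\sum|q_n|<\infty$. Under only the $\ell^2$ assumption this fails pointwise for almost every realisation, but the centering condition $\E[q_n]=0$ allows one to recover the argument \emph{in expectation}: the first-order effect of each $q_n$ cancels, leaving only the summable quadratic contribution.

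By \eqref{eq7}, Fubini, and Chebyshev, it suffices to establish
\[ \sup_{\lambda\in I,\,\eps\in(0,1]}\E\big[V_\lambda(G_{\lambda+\i\eps}(0,0))\big]\le C(I),\qquad I\subset(-2,2)\text{ compact}, \]
where $V_\lambda(z):=\cosh\mathrm{d}(z,z_+(\lambda))$ is the cosh-distance to the free fixed point of Example (i). Since $V_\lambda(z)\ge c(\lambda)/\Im z$ uniformly on $\H$, such an estimate (evaluated along a sequence $\eps_k\downarrow 0$) reproduces \eqref{eq7} almost surely on $I$, hence $I\subseteq\s_{ac}(H)$ and purely ac spectrum on $(-2,2)$ by exhaustion.

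I would work with the tail Green functions $\a_n:=G^{(t)}_{\lambda+\i\eps}(n,n)$. By \eqref{recursion}, $\a_n=\Phi_n(\a_{n+1})$, and $\a_{n+1}$ depends only on $(q_k)_{k\ge n+1}$, so it is independent of $q_n$. Writing $\Phi_n(z)=\Phi^{(0)}(z-q_n)$ with $\Phi^{(0)}(z)=-1/(z+\lambda)$, and using that for real $\lambda\in(-2,2)$ the map $\Phi^{(0)}$ is a hyperbolic isometry fixing $z_+(\lambda)$, one has the exact identity $V_\lambda(\Phi_n(z))=V_\lambda(z-q_n)$. A direct computation from \eqref{def:metric}, averaged over $q_n$ conditional on $\a_{n+1}$ and using $\E[q_n]=0$, then yields
\[ \E\big[V_\lambda(\a_n)\,\big|\,\a_{n+1}\big] = V_\lambda(\a_{n+1}) + \frac{\E[q_n^2]}{2\,\Im(\a_{n+1})\,\Im z_+(\lambda)}, \]
and the bound $1/\Im z\le C V_\lambda(z)$ converts this into the multiplicative form $\E[V_\lambda(\a_n)\mid\a_{n+1}]\le(1+C\E[q_n^2])V_\lambda(\a_{n+1})$.

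Taking total expectations and iterating from $n=0$ to $N-1$ gives $\E[V_\lambda(\a_0)]\le\E[V_\lambda(\a_N)]\exp\big(C\sum_{n<N}\E[q_n^2]\big)$. The exponential is bounded uniformly by $\exp(C\,\E[\sum_n q_n^2])<\infty$, and $\E[V_\lambda(\a_N)]$ is controlled uniformly in $N$ and $\eps$ by comparing $\a_N$ with the truncation $\tilde\a_N$ whose potential is set to zero past $N$: then $\tilde\a_N=z_+(\lambda)$ so $V_\lambda(\tilde\a_N)=1$, and Theorem \ref{thm3} together with an $\ell^2$-stability estimate plus Fatou's lemma closes the comparison as $N\to\infty$. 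The main obstacle is the comparison $1/\Im z\le CV_\lambda(z)$: this is exactly what the $\mathrm{cd}$-weight of \cite{FHS3} is designed to provide, and without it the additive remainder $\E[q_n^2]/\Im(\a_{n+1})$ would threaten to blow up as $\eps\downarrow 0$, breaking the Gronwall-type closure.
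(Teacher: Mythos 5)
Your one-step mechanism is exactly the right one — conditioning on $\a_{n+1}$, using $\E[q_n]=0$ to kill the linear term, and absorbing the quadratic term into $(1+C\,\E[q_n^2])$ times the weight is the heart of the paper's proof as well (there the weight is $\mathrm{cd}^2+1$ with $\mathrm{cd}$ from \eqref{def:cd1}, and the same telescoping product $\prod_i(1+C_0\E[q_i^2])\le\exp(C_0\sum_i\E[q_i^2])$ appears). But there are two genuine gaps. First, the reduction to the spectral statement is wrong as stated: a bound on $\E[V_\lambda(G)]$ cannot ``reproduce \eqref{eq7} almost surely'' --- \eqref{eq7} is a pointwise bound, uniform in $\l$ and $\eps$, and no fixed-$\l$ moment bound plus Chebyshev delivers that. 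The correct bridge (which the paper sets up in Section \ref{setup} and via \cite{Klein,SimonLp}) is: a bound on $\E[w(G_{\l+\i\eps}(0,0))^p]$, uniform in $\eps$ and $\l\in I$, with a weight dominating $\Im z$ and with $p>1$, then Fatou plus the $L^p$ criterion for Borel transforms gives a.s.\ purely ac spectrum on $I$. Your $V_\lambda=\cosh\mathrm{d}(\cdot,z_\l)\simeq 1+\mathrm{cd}(\cdot)/(2\Im z_\l)$ is the $p=1$ moment; and the lower bound you invoke, $V_\lambda(z)\ge c/\Im z$, controls the wrong quantity (an upper bound on $1/\Im G$ does not exclude singular spectrum --- what is needed is an $L^p$, $p>1$, upper bound on $\Im G$ at the boundary; the $L^1$ norm of $\Im G$ over an interval is bounded by $\pi$ for \emph{any} measure, singular or not). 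With $p=1$ you could at best get $[-2,2]\subseteq\s_{ac}(H)$ a.s.\ (positivity of $\Im G(\l+\i0)$ for a.e.\ $\l$), but not the ``purely ac on $(-2,2)$'' part of the theorem. This is why the paper works with $\mathrm{cd}^2$, i.e.\ $p=2$; your scheme goes through verbatim with $V_\lambda$ replaced by $\mathrm{cd}^p+1$, $p>1$, since the centering still kills the linear-in-$q_n$ term and $A_1,A_2$ stay bounded.

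Second, your closure at the far end does not work. Iterating on the true tail Green functions $\a_n$ of \eqref{recursion} leaves the boundary term $\E[V_\lambda(\a_N)]$, which must be bounded uniformly in $N$ \emph{and} in $\eps$; your proposed comparison with the zero-tail-potential Green function via ``an $\ell^2$-stability estimate plus Fatou'' is unjustified and essentially circular --- uniform-in-$\eps$ $\ell^2$-stability of $\a_N$ in the hyperbolic metric is exactly the content of the theorem (the pointwise telescoping of Proposition \ref{prop4} needs $\ell^1$, cf.\ Example (ii)). The paper sidesteps this entirely by running the recursion on the \emph{deterministic} orbit $z_{0,n}=\Phi_0\circ\cdots\circ\Phi_n(z_\l)$: the starting point is the weight's zero ($\mathrm{cd}(z_\l)=0$), so there is no boundary term at all, and the resulting bound, uniform in $\eps$, is transferred to $G_\l(0,0)$ by Theorem \ref{thm3} together with Proposition \ref{prop2} and bounded convergence at fixed $\eps>0$. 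A smaller point: your ``exact identity'' $V_\lambda(\Phi_n(z))=V_\lambda(z-q_n)$ holds only at $\eps=0$; for $\l+\i\eps$ you should center the weight at $z_{\l+\i\eps}$ (as the paper's identity $\mathrm{cd}(\Phi_n(z))=|z-z_\l-q_n|^2/\Im(z+\l)$ implicitly does), otherwise $O(\eps)$ errors accumulate over the $N\to\infty$ iteration, which is taken before $\eps\downarrow0$.
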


\begin{rems} \begin{enumerate}
\item[(i)] Deylon-Simon-Souillard~\cite{DSS85} have proved Theorem \ref{thm5} 
in 1985 even without assuming compact support of the probability measure.
Furthermore, they proved that if $C^{-1}n^\rho \le \E[|q_n|^2]\le C n^\rho$ 
for some constant $C$ and $\rho<1/2$, then 
the spectrum of $H=-\D+q$ is pure point (almost surely) with exponentially
localized eigenfunctions.
\item[(ii)] In \cite{FHS4}, we have extended Theorem \ref{thm5} to matrix-valued 
potentials, and applied to (random) Schr\"odinger operators on a strip.
\item[(iii)] On the two-dimensional lattice $\N_0^2$, Bourgain~\cite{B02}
proved $\sigma_{\rm ac}(\Delta+q) = \sigma(\Delta)$ for centered 
Bernoulli and Gaussian distributed, independent random potentials 
with $\sup_{n\in\N_0^2} \E[q_n^2]^{1/2}\, |n|^\rho <\infty$ for $\rho>1/2$.  
In~\cite{B03}, Bourgain improves this result to $\rho>1/3$.
\end{enumerate}
\end{rems}

\begin{proof}[Proof of Theorem \ref{thm5}] For $\l\in (-2,2)$ let 
$z_\l:=-\l/2+\i\sqrt{1-\l^2/4}$ be the (truncated) Green function of the 
Laplace operator $-\D$, see \eqref{fpglg}. Let us introduce the weight function
\be \label{def:cd1}
\mathrm{cd}:\H\to(0,\infty)\,,z\mapsto \frac{|z-z_\l|^2}{\Im(z)}\,.
\ee
By Proposition \ref{prop2} there is a disk $B\subset \H$ so that
$z_{0,n}:=\Phi_0\circ\cdots\circ\Phi_n(z_\l)\in B$ for all $n\ge2$ and potentials
$q$ with values in a compact set $K$. Moreover, by Theorem \ref{thm3},
$G_\l(0,0) = \lim_{n\to\infty} z_{0,n}$. Hence, by the continuity of the
function $\mathrm{cd}$, we have $\lim_{n\to\infty}\mathrm{cd}^2(z_{0,n}) =  
\mathrm{cd}^2(G_\l(0,0))$. Since $\mathrm{cd}^2$ is bounded on the disk $B$ 
we conclude that $\mathbb E(\mathrm{cd}^2(G_\l(0,0))) = \lim_{n\to\infty} \mathbb 
E(\mathrm{cd}^2(z_n))$. It remains to show that this limit is bounded. To this end, 
we define the rate of expansion, 
\be\mu(z,q_n):=\frac{\mathrm{cd}^2(\Phi_n(z))+1}{\mathrm{cd}^2(z)+1}\,.
\ee
Noticing that $\mathrm{cd}(\Phi_n(z)) = |z-z_\l-q_n|^2/\Im(z+\l)$ and using 
$|q_n|\le C$ to bound cubic and quartic terms of $q$ in terms of quadratic 
ones, we obtain that
\be\mu(z,q_n)\le A_0(z) + A_1(z) q_n + A_2(z) q_n^2
\ee
with rational functions $A_i(z)$. The functions $A_1$ und $A_2$ are bounded and
$A_0\le1$. Let us set $z_{\ell,n} := \Phi_{\ell} \circ \Phi_{\ell+1} \circ
\cdots \circ \Phi_{n}(z_\lambda)$. Note that $z_{\ell,n} = \Phi_{\ell}
(z_{1,n})$. 
By the recursion relation \eqref{recursion}, 
\begin{eqnarray*}\lefteqn{
\E[{\rm cd}^2(z_{0,n})] + 1}
\\
&=& \int_{K^{n+1}} \big({\rm cd}^2(z_{1,n}) + 1 \big) \, d \nu_0(q_0)
\cdots d \nu_n(q_n)
\\
&=& \int_{K^{n+1}} \frac{{\rm cd}^2[\Phi_{0} (z_{1,n})] + 1 }{{\rm
cd}^2(z_{1,n}) + 1} \,\big({\rm cd}^2(z_{1,n}) + 1 \big) \,
d \nu_0(q_0) \cdots d \nu_n(q_n)
\\
&\leq& \int_K \big( 1 + A_1(z_{1,n}) q_0  + C_0 q_0^2 \big)\, d\nu_0(q_0)
\int_{K^n} \big({{\rm cd}^2(z_{1,n}) + 1}\big) \,
d \nu_1(q_1) \cdots d \nu_n(q_n)
\\
&=& \big( 1 + C_0 \,\E[q_0^2] \big) \int_{K^n} \big({\rm cd}^2(z_{1,n}) + 1\big) \,
d \nu_1(q_2) \cdots d \nu_n(q_n)\\
&\leq& \prod_{i=0}^n \big( 1 + C_0 \,\E[q_i^2] \big)\;\leq\;  
\exp  \big(C_0 \sum_{i=0}^\infty \E[q_i^2] \big) \;<\infty\,.
\end{eqnarray*}
%
%
%
%
%
\vskip-3em
\end{proof}

\section{General graphs}\label{higher dimension}

We generalize the approach of the previous section to calculating the Green
function via transfer matrices (or rather M\"obius transformations) to  
general graphs $\mathcal G=(V,E)$, that is, to all graphs that obey 
the conditions 
(i) and (ii) of Section \ref{setup}. Let us choose a point in $V$ 
which we denote by $0$. If $\mathrm{dist}(v,w)$ is the graphical distance 
between the two lattice points $v$ and $w$ then we define the $n$-th sphere,
\be\label{eq19}
S_n:=\{v\in V:\mathrm{dist}(v,0) = n\}\,.
\ee
Clearly, $V=\bigcup_{n\ge0} S_n$ and $\ell^2(V) = \bigoplus_{n\ge0} 
\ell^2(S_n)$. We decompose the adjacency matrix, $\D$, of $\mathcal G$ into the 
block matrix form
\be\label{eq20}
\D = \left[\!\!\begin{array}{cccccc}D_0&E_0^T&0&\cdots&\cdots&\cdots\\
E_0&D_1&E_1^T&0&\cdots&\cdots\\0&E_1&D_2&E_2^T&0&\cdots\\
\vdots&\ddots&\ddots&\ddots&\ddots&\ddots\end{array}\!\!\right]\,,
\ee
where $D_n$ is the adjacency matrix of $\mathcal G$ restricted to
$S_n$. $E_n:\ell^2(S_n)\to\ell^2(S_{n+1})$ is the map with kernel
$$ E_n(v,w) = \left\{\begin{array}{ccc}1&\mbox{ if }v\in S_n,w\in S_{n+1},
(v,w)\in E\\0&\mbox{ else }\end{array}\right.\,.
$$
The potential $q = \bigoplus_{n\ge0}q_n$ is diagonal; $q_n$ equals the
restriction of $q$ to the sphere $S_n$ which is now considered a 
$|S_n|$-dimensional diagonal matrix. $H= -\D +q$ is then of the block matrix 
form
\be\label{eq21}
H = \left[\!\!\begin{array}{cccccc}-D_0+q_0&-E_0^T&0&\cdots&\cdots&
\cdots\\-E_0&-D_1+q_1&-E_1^T&0&\cdots&\cdots\\0&-E_1&-D_2+q_2&-E_2^T&0&\cdots\\
\vdots&\ddots&\ddots&\ddots&\ddots&\ddots\end{array}\!\!\right]\,.
\ee
Let $P_n:\ell^2(V)\to\ell^2(S_n)$ be the orthogonal projection of $\ell^2(V)$
onto $\ell^2(S_n)$, and $P_{n,\infty}:=\sum_{k\ge n}P_k$. Then we define the 
truncated Hamiltonian
\be H_n:=P_{n,\infty} \,H\, P_{n,\infty}
\ee
and the truncated Green function
\be\label{eq22}
G_\l^{(t)}(n,n):=P_n(H_n-\l)^{-1}P_n\,,\quad n\in\N_0\,.
\ee
$G_\l^{(t)}(n,n)$ is a $d_n\times d_n$ dimensional matrix with $d_n=|S_n|$. By 
definition, $G_\l^{(t)}(0,0)$ equals the Green function, $G_\l(0,0)$.
Furthermore (assuming as usual $\l\in\H$),
\be G_\l^{(t)}(n,n) \in {\mathbb S\mathbb H}_{d_n}\,,
\ee
where 
$$ {\mathbb S\mathbb H}_{d} :=\{Z=X + \i Y: X,Y\in \mathrm{Mat}(d,\R), 
X=X^T,Y>0\}
$$
is the so-called \textbf{Siegel half-space}. Clearly, ${\mathbb S\mathbb H}_1 = \H$. 

The matrices $G_\l^{(t)}(n,n)$ generalize the numbers $\a_n\in\H$ from
(\ref{eq13}). 
More precisely, let $\Phi_n: {\mathbb S\mathbb H}_{d_{n+1}}\times 
\mathrm{Mat}(d_n,\R)\times\H\to{\mathbb S\mathbb H}_{d_n}$ be defined as
\be \Phi_n(Z,q_n,\l):= -(E_n^T Z E_n + D_n -q_n+\l)^{-1}\,.
\ee
Then in analogy with (\ref{def:alpha}) we have 
\be\label{eq23}
G_\l^{(t)}(n,n) = \Phi_n\big(G_\l^{(t)}(n+1,n+1),q_n,\l\big)\,.
\ee
The proof is simply based upon Schur's (or Feschbach's) formula
\be
\left[\!\!\begin{array}{cc}A & B^T\\B & C\end{array}\!\!\right]^{-1} = 
\left[\!\!\begin{array}{cc}(A-B^TC^{-1}B)^{-1} &(B^TC^{-1}B-A)^{-1}B^TC^{-1}\\
C^{-1}B(B^TC^{-1}B-A)^{-1} & (C-BA^{-1}B^T)^{-1}\end{array}\!\!\right]\,,
\ee
by setting $A := -D_n+q_n-\l, B := (E_n,0,\ldots)$ and $C := H_{n+1}-\l$. 

On ${\mathbb S\mathbb H}_{n}$, we do not use the standard Riemann metric but a 
so-called Finsler metric. To this end, let $W\in \mathrm{Mat}(n,\C)$ be an 
element of the tangent space at $Z=X+\i Y\in{\mathbb S\mathbb H}_{n}$. Then we 
set
\be \label{4.10}F_Z(W):=\|Y^{-1/2} W Y^{-1/2}\|\,,
\ee
where $\|\cdot\|$ is the operator norm (rather than the Hilbert-Schmidt norm).
[If $n=1$ then the length of the tangent vector is $|W|/Y$ as in
(\ref{dsRiem}).] The Finsler metric on ${\mathbb S\mathbb H}_{n}$ is defined 
as (thereby suppressing the dimension $n$)
\be \label{def:Finsler}
\mathrm{d}(Z_1,Z_2):=\inf_{Z(t)}\int_0^1 F_{Z(t)}\big(\dot{Z}(t)\big)\,dt\,,
\quad Z_1,Z_2\in {\mathbb S\mathbb H}_{n}\,,
\ee
whereby $Z(t)$ runs through all differentiable paths $Z:[0,1]\to
{\mathbb S\mathbb H}_{n}$ with $Z(0)=Z_1,Z(1)=Z_2$.

The Propositions \ref{prop1}, \ref{prop2}, and \ref{prop4} can be extended to
general graphs, see \cite[Proposition 3.3, Lemma 3.5, Lemma 4.5]{FHS1}. 
For instance, for a fixed potential $q$ and fixed $\l\in\H$, the transformation 
$\Phi_n$ is a contraction from $({\mathbb S\mathbb H}_{d_{n+1}},\mathrm{d})$ 
into $({\mathbb S\mathbb H}_{d_{n}},\mathrm{d})$. Theorem \ref{thm3} generalizes
as follows.
\begin{thm}[\!\!\cite{FHS1}, Theorem 3.6] \label{thm6} Let us assume that the 
matrices $E_i$ in \eqref{eq20} all have kernel $\{0\}$. Let $\Im(\l)>0$ and let
$Z_i\in{\mathbb S\mathbb H}_{d_i}$ with $d_i = |S_{i}|$ be an \emph{arbitrary} 
sequence. Then for a bounded potential $q$ we have
\be \lim_{n\to\infty} \Phi_0\circ\cdots\circ\Phi_n(Z_{n+1},q_n,\l) = \phi_0 = 
G_\l(0,0)\,.
\ee 
\end{thm}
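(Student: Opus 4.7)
The plan is to mirror the proof of Theorem \ref{thm3}, replacing the upper half-plane $\H$ by the Siegel half-space ${\mathbb S\mathbb H}_{d_n}$ equipped with the Finsler metric $\mathrm{d}$ from \eqref{4.10}--\eqref{def:Finsler}. The hypothesis $\ker E_i = \{0\}$ is exactly what keeps each $\Phi_n$ inside the Siegel half-space: if $Z = X + \i Y$ with $Y > 0$, injectivity of $E_n$ gives $E_n^T Y E_n > 0$, so $E_n^T Z E_n + D_n - q_n + \l$ has strictly positive imaginary part, and its negative inverse again lies in ${\mathbb S\mathbb H}_{d_n}$.

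First, I would invoke the matrix analogues of Propositions \ref{prop1} and \ref{prop2} announced just above the theorem (proved in \cite{FHS1}). These supply: (a) each $\Phi_n$ is a Finsler contraction ${\mathbb S\mathbb H}_{d_{n+1}} \to {\mathbb S\mathbb H}_{d_n}$; (b) the image $\Phi_n({\mathbb S\mathbb H}_{d_{n+1}})$ is Finsler-bounded; and (c) the two-step composition $\Phi_0 \circ \Phi_1$ lands inside a fixed Finsler ball $B \subset {\mathbb S\mathbb H}_{d_0}$. On bounded sets the contraction becomes strict, with a uniform rate $\d < 1$ depending only on $\Im(\l)$, $\|q\|_\infty$, and the vertex-degree bound (ii) of Section \ref{setup}.

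Next, set $W_n := \Phi_0 \circ \cdots \circ \Phi_n(Z_{n+1},q_n,\l)$. For $m \ge n \ge 2$, after stripping off $\Phi_0 \circ \Phi_1$ every intermediate point lies in the common Finsler ball $B$, so applying strict contraction along the outer $n-1$ factors gives
\[
\mathrm{d}(W_n, W_m) \;\le\; C\,\d^{n-1}.
\]
Hence $(W_n)$ is Cauchy and converges in the complete Finsler manifold ${\mathbb S\mathbb H}_{d_0}$ to some $W_\infty$; the same estimate applied to two sequences $(Z_n),(Z'_n)$ shows that $W_\infty$ does not depend on the sequence chosen.

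To identify $W_\infty$ with $G_\l(0,0)$, I would plug in the distinguished sequence $Z_i := G_\l^{(t)}(i,i) \in {\mathbb S\mathbb H}_{d_i}$. By the recursion \eqref{eq23}, induction on $n$ gives
\[
\Phi_0 \circ \cdots \circ \Phi_n\big(G_\l^{(t)}(n+1,n+1),q_n,\l\big) \;=\; G_\l^{(t)}(0,0) \;=\; G_\l(0,0)
\]
for every $n$, so for this choice $W_n \equiv G_\l(0,0)$ and consequently $W_\infty = G_\l(0,0)$. The main obstacle lies in the first step: verifying that $\d$ and $\mathrm{diam}(B)$ can be taken independently of $n$ as the sphere sizes $d_n$ vary. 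Condition (ii) of Section \ref{setup} uniformly bounds $\|D_n\|$ and $\|E_n\|$, and this is precisely what the matrix extensions of Propositions \ref{prop1} and \ref{prop2} in \cite{FHS1} exploit to produce such uniform constants; once these are in hand, the rest is the Cauchy-sequence bookkeeping of Theorem \ref{thm3}.
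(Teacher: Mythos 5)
Your proposal is correct and follows essentially the same route the paper takes: the survey proves Theorem \ref{thm3} by the two-step-image/strict-contraction argument in the hyperbolic metric and then, exactly as you do, generalizes it by invoking the Siegel half-space analogues of Propositions \ref{prop1} and \ref{prop2} (with the Finsler metric \eqref{def:Finsler}, the kernel condition on $E_i$, and uniform constants from the degree bound), identifying the limit via the recursion \eqref{eq23} with the truncated Green matrices, just as \eqref{eq15} was used in one dimension. No genuine gaps beyond the points you already delegate to the cited matrix extensions in \cite{FHS1}.
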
 

\section{Trees}\label{tree}

Let us consider for simplicity the (rooted) binary tree, $T_2$. The recursion 
relation \eqref{eq23} is very simple since diagonal matrices are mapped into 
diagonal matrices. Hence, the truncated Green functions (or rather matrices)
are diagonal by Theorem \ref{thm6}. Let $q_n =\mathrm{diag}(q_{n,1},\ldots,
q_{n,2^n})$ be the diagonal matrix with diagonal real-valued entries $q_{n,1},
\ldots,q_{n,2^n}$ and $Z=\mathrm{diag}(z_1,\ldots,z_{2^{n+1}})$ a diagonal 
matrix in ${\mathbb S\mathbb H}_{2^{n+1}}$, that is, with $z_i\in\H$. Then
\be \label{def:Psi}
\Phi_n\big(Z,q_n,\l\big) = \mathrm{diag}\big(\Psi(z_1,z_2,q_{n,1},\l),
\Psi(z_3,z_4,q_{n,2},\l),\ldots\big)
\ee
with the map $\Psi:\H^2\times\R\times\H\to\H$ defined as 
$\Psi(z_1,z_2,q,\l):=-1/(z_1+z_2+\l-q)$. 

Now put $q=0$ and consider the fixed point equation
\be \Psi(z,z,0,\l) = -\frac{1}{2z+\l} = z\,.
\ee
The two solutions are obviously $-\l/4 \pm \sqrt{\l^2/16 - 1/2}$. 
For $\l\in\R$, they have non-zero imaginary component if and only if
$|\l|<2\sqrt{2}$. We choose 
\be \label{fixpnt} z_\l := -\l/4 +\i \sqrt{1/2 - \l^2/16}
\ee
for the solution in $\H$. Furthermore, for $n\in\N_0$ let $Z_n:=
\mathrm{diag}(z_\l,\ldots,z_\l)\in{\mathbb S\mathbb H}_{2^n}$. Then 
\beax \Phi_n\big(Z_{n+1},0,\l\big)=\mathrm{diag}
\big(\Psi(z_\l,z_\l,0,\l),\ldots\big)=Z_{n}\,,\quad n\in\N_0\,.
\eeax 
Theorem \ref{thm6} then shows that $G_\l(0,0)=z_\l$ for the rooted binary tree. 
Hence, 
$$ \s(\D)= \s_{ac}(\D) = [-2\sqrt{2},2\sqrt{2}]\,.
$$

Let us consider the Anderson model, $H_a=-\D+a\, q$, on this tree with iid 
random potential $q$, which is determined by a probability measure, $\nu$. 
For simplicity, we assume that $\nu$ has compact support. 
\begin{thm}[\!\!\cite{FHS2}, Theorem 1] \label{ext states conj} For every $|\l|< 2\sqrt{2}$ there is 
an $a_0>0$ so that for all $0\le a\le a_0$ almost surely
\be \s_{ac}(H_a)\cap (-\l,\l) = (-\l,\l)\,,\quad\s_{s}(H_a)\cap (-\l,\l) = 
\emptyset\,.
\ee
\end{thm}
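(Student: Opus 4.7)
The plan is to extend the strategy of Theorem~\ref{thm5} from the one-dimensional linear recursion to the self-similar branching recursion on the binary tree, bounding an expected weight of the random Green function uniformly in $\eps\downarrow 0$ and then invoking the sufficient criterion for ac spectrum. By the probabilistic version of criterion \eqref{eq7} used in the proof of Theorem~\ref{thm5}, to obtain both $\s_{ac}(H_{a})\cap(-\l,\l)=(-\l,\l)$ and the absence of singular spectrum on this interval it is enough to produce some $p>0$ with
\[
\limsup_{\eps\downarrow0}\,\sup_{|\l'|<\l}\,\E\bigl[\cd^{p}(G_{\l'+\i\eps}(0,0))\bigr]<\infty,
\]
where $\cd(z):=|z-z_{\l'}|^{2}/\Im(z)$ is centered at the deterministic fixed point $z_{\l'}$ from \eqref{fixpnt}. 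The geometric fact driving everything is $|z_{\l'}|^{2}=1/2$ throughout $|\l'|<2\sqrt{2}$.

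I would then install the distributional recursion inherited from the self-similarity $T_{2}=\{0\}\sqcup T_{2}^{(1)}\sqcup T_{2}^{(2)}$: the Schur identity behind \eqref{eq23} together with the independence of the two sub-tree potentials yield
\[
G_{\l}(0,0)\stackrel{\mathrm{d}}{=}\Psi\bigl(G_{\l}^{(1)},G_{\l}^{(2)},a q_{0},\l\bigr)=-\bigl(G_{\l}^{(1)}+G_{\l}^{(2)}+\l-a q_{0}\bigr)^{-1},
\]
with $G_{\l}^{(1)},G_{\l}^{(2)}$ i.i.d.\ copies of $G_{\l}(0,0)$ and $q_{0}\sim\nu$ independent of both. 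A direct computation using $|z_{\l}|^{2}=1/2$ rewrites the weight as the tree analogue of the one-dimensional rate of expansion:
\[
\cd(G_{\l}(0,0))=\frac{\bigl|(z_{\l}-G_{\l}^{(1)})+(z_{\l}-G_{\l}^{(2)})+a q_{0}\bigr|^{2}}{2\bigl(\Im(G_{\l}^{(1)})+\Im(G_{\l}^{(2)})\bigr)}.
\]

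The decisive step is to convert this identity into a closed contractive inequality $\E[\cd^{p}(G_{\l})+1]\le\rho(a)\,\E[\cd^{p}(G_{\l})+1]+R(a)$ with $\rho(a)<1$ and $R(a)<\infty$ for $a\le a_{0}$. At $a=0$ the deterministic $z_{\l}$ is the unique solution, so what is at issue is linear stability of the distributional fixed point. The key mechanism is the interplay between the branching factor $2$ and the Möbius contraction factor $|z_{\l}|^{2p}=2^{-p}$: the naive product $2^{1-p}$ equals $1$ exactly at $p=1$, yielding only marginal stability. To upgrade this to genuine contraction one uses the \emph{independence} of $G_{\l}^{(1)}$ and $G_{\l}^{(2)}$, which forces the cross-correlation in $\E\bigl|(G_{\l}^{(1)}-z_{\l})+(G_{\l}^{(2)}-z_{\l})\bigr|^{2}$ to reduce to $|\E(G_{\l}^{(1)}-z_{\l})|^{2}$; the latter is itself small for small $a$, so only variance-type contributions survive at leading order. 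The additive noise $a q_{0}$ is an $O(a^{2p})$ perturbation controllable by the compact support of $\nu$ and the boundedness of $\cd$ on the $\Phi_{n}$-image provided by Propositions~\ref{prop1} and~\ref{prop2}; restricting to $a\le a_{0}$ keeps $\rho(a)<1$.

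The main obstacle is precisely this critical matching of branching and contraction at $p=1$ on $T_{2}$: no slack is available in pointwise estimates, and the argument must extract strictly sub-leading contractive corrections from the independence of the two sub-tree Green functions (and couple the numerator of $\cd$ to its denominator via the hyperbolic geometry). Once the closed inequality is in hand, uniform boundedness in $\eps\downarrow0$ and $|\l'|<\l$ follows as in Proposition~\ref{prop4}, and the sufficient criterion delivers the stated ac and purity conclusion almost surely.
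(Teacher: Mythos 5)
Your setup coincides with the paper's (the distributional recursion $G_\l(0,0)\stackrel{\mathrm{d}}{=}\Psi(G^{(1)},G^{(2)},aq_0,\l)$ for iid copies, the weight $\cd(z)=|z-z_\l|^2/\Im(z)$ centered at the tree fixed point \eqref{fixpnt}, and the moment $M_p=\E[\cd^p(G)]$), but the mechanism you propose for the decisive contraction step has a genuine gap. First, the cross-term cancellation you invoke is not available in the form you use it: in $\cd(G)=\tfrac12\,|(G^{(1)}-z_\l)+(G^{(2)}-z_\l)-aq_0|^2/(\Im G^{(1)}+\Im G^{(2)}+\Im\l)$ the mixed term sits over the \emph{random} denominator and is then raised to the power $p$, so independence of $G^{(1)},G^{(2)}$ does not let you replace the expectation of that ratio by $|\E(G-z_\l)|^2$ divided by anything; expectations of such ratios do not factor, and this coupling of numerator and denominator is exactly the difficulty. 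Second, the input you rely on --- that $|\E(G_\l)-z_\l|$ is small for small $a$ uniformly as $\Im\l\downarrow0$ --- is essentially the statement to be proved: a priori the distribution of $G$ could leak to the real axis (as it would under localization, where $\Im G\to0$), and no soft bound gives this uniform smallness. Finally, the criterion and the convexity estimates require $p>1$, not merely $p>0$.

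What is missing is the paper's actual device for extracting a strict contraction. One application of the recursion only yields the pointwise bound $\mu_{2,p}(z_1,z_2,0,\l)\le1$ (Cauchy--Schwarz plus strict convexity of $x\mapsto x^p$, $p>1$), with a nontrivial equality set ($u=sv$, $|u_1|=|u_2|$); the paper notes explicitly that one cannot expect $\mu_{2,p}\le 1-\mu$ near the boundary, and that even the averaged hypothesis $\bar{\mu}_{2,p}\le1-\mu$ is ``not quite true'' for this model --- averaging in $q$ is the mechanism for the strongly correlated model of Theorem \ref{twoperiodic}, not for the iid Anderson model. Instead, the paper applies the recursion \emph{twice}, forms the cyclically symmetrized two-step quantity $\mu_{3,p}$ of \eqref{mu3p}, proves $\mu_{3,p}(\boldsymbol z,\boldsymbol 0,\l)\le1-2\mu$ near the boundary, extends $\mu_{2,p}$ and $\mu_{3,p}$ upper semi-continuously to the boundary of $\H^3$ via a radial compactification in the $u$-variables, and uses compactness to obtain the pointwise bound $\mu_{3,p}(\boldsymbol z,\boldsymbol q,\l)\le1-\mu$ for small $\boldsymbol q$; splitting the integration into a compact set and its complement then closes the self-consistent inequality $M_p(\rho_a)\le C+(1-\mu)M_p(\rho_a)$, whence $M_p(\rho_a)\le C/\mu$. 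Your single-step argument with cross-term averaging would have to be replaced by this (or an equivalent) two-step, boundary-compactness argument before the rest of your outline can go through.
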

This has been proved first by Klein~\cite{Klein} in 1998. The statement 
$\s_{ac}(H_a)\cap (-\l,\l) \not=\emptyset$ has been proved by Aizenman, Sims, 
and Warzel~\cite{ASW1,ASW2} in 2005. The following proof is shorter than our first
one presented in~\cite{FHS2} since we now work directly with the Green function
instead of the sum of Green functions, which simplifies the analysis of the
functions $\mu_{2,p}$ and $\mu_{3,p}$ (see the following proof) considerably.

\begin{proof}[Sketch of Proof] Let $\l\in\H$ with $|\Re(\l)|<2\sqrt{2}$.
The truncated Green function,
$G_{\l,a}^{(t)}(n,n)$, is an ${\mathbb S\mathbb H}_{2^n}$-valued random 
variable with range inside the diagonal matrices. In fact, its probability 
distribution equals $\bigotimes_{i=1}^{2^n} \rho_a$, where, for short, $\rho_a$ 
is the probability distribution of $G_{\l,a}(0,0):=(H_a-\l)^{-1}(0,0)$. 
Using the recursion relation (\ref{eq23}) we see that $\rho_a$ equals the 
image measure $(\rho_a\times \rho_a\times \nu_a)\circ\Psi^{-1}$ with the 
function $\Psi$ as in \eqref{def:Psi}.

Now we define a moment of $\rho_a$ that we need to control as $\Im(\l)
\downarrow0$, that is, we are seeking a uniform bound of $M_p(\rho_a)$ below 
as $\Im(\l)\downarrow0$. As in \eqref{def:cd1} but with $z_\l$ from 
\eqref{fixpnt}, let us introduce the weight function
\be \label{def:cd2}
\mathrm{cd}(z) := \frac{|z-z_\l|^2}{\Im(z)}\,,\quad z\in\H\,.
\ee
Then we define for some $p>1$
\be M_p(\rho_a) := \int_\H \mathrm{cd}^p(z)\, d\rho_a(z)\,.
\ee
Applying the recursion relation we get
\begin{eqnarray} \label{mu2p}
\lefteqn{M_p(\rho_a)}\\
&=&\int_{\H^2\times \R}\mathrm{cd}^p\big(\Psi(z_1,z_2,q,\l)\big)\, 
d\rho_a(z_1) d\rho_a(z_2) d\nu_a(q)\nonumber
\\
&=&\int_{\H^2\times \R}\underbrace{\frac{\mathrm{cd}^p\big(\Psi(z_1,z_2,q,\l)
\big)}{\mfr12\mathrm{cd}^p(z_1) + \mfr12\mathrm{cd}^p(z_2)}}_{=:\mu_{2,p}
(z_1,z_2,q,\l)} \,
\big(\mfr12 \mathrm{cd}^p(z_1) + \mfr12\mathrm{cd}^p(z_2)\big)\, d\rho_a(z_1) 
d\rho_a(z_2) d\nu_a(q) \,.\nonumber
\end{eqnarray}
For $z_i\in\H$ and $y_i:=\Im(z_i)$, let $u_i:=(z_i-z_\l)/\sqrt{y_i}\in\C$. Then
$\mathrm{cd}(z_i)=|u_i|^2$. Using $u:=(u_1,u_2)$ and $v:=\big(\sqrt{y_1/
(y_1+y_2)},\sqrt{y_2/(y_1+y_2)}\big)$ we obtain
$$ \mathrm{cd}(\Psi(z_1,z_2,0,\l)) =\mfr12\,\frac{|z_1+z_2-2z_\l|^2}{y_1+
y_2+\Im(\l)} < \mfr12\,\frac{|z_1-z_\l +z_2-z_\l|^2}{y_1+y_2} =  \mfr12\,
\big|\langle u,v\rangle\big|^2\,.
$$
By the Cauchy-Schwarz inequality and the strict convexity of $x\mapsto x^p$ 
for $p>1$ we see that
\be \mu_{2,p}(z_1,z_2,0,\l) \le \frac{\Big(\mfr12\,\big|\langle
u,v\rangle\big|^2\Big)^p}{\mfr12\,|u_1|^{2p}+\mfr12\,|u_2|^{2p}} \le1
\,.
\ee
For $\Im(\l)=0$, the function $\mu_{2,p}(z_1,z_2,0,\l)=1$ if and only if $u=sv$
for some $s\in\C$ and if 
$|u_1|=|u_2|$. The function $(z_1,z_2,\l)\mapsto\mu_{2,p}(z_1,z_2,0,\l)$ is
continuous on $\C^2\times(-2\sqrt{2},2\sqrt{2})$ except at $u=0$. 
This implies that in order to have equality, $z_1$ and $z_2$ have to be of the 
form $z_1=x_1+\i y,z_2=x_2+\i y$ with $|x_1+\l/4| = |x_2+\l/4|$. Obviously, we 
cannot expect that $\mu_{2,p}(z_1,z_2,q,\l)\le1-\mu<1$ for $z_1,z_2$ in a 
neighborhood of the boundary of $\H^2$ with a constant $\mu$ und $q$ in an 
interval $I\ni0$. For the sake of the argument, let us suppose that the 
\emph{average}
$\bar{\mu}_{2,p}(z_1,z_2,\l):=\int_\R \mu_{2,p}(z_1,z_2,q,\l)\,d\nu_a(q)\le1-
\mu<1$ for $z_1,z_2$ in a neighborhood of the boundary of $\H^2$ for small 
enough disorder, $a$.  
Then choose some compact set $B\subset\H^2$ with $B\ni(z_\l,z_\l)$, and split 
the integration into an integral over $B$ and its complement in $\H^2$. On the 
first set, the integrand is bounded and on the second set we use the contraction
property of $\bar{\mu}_{2,p}$. That is,
\begin{eqnarray}\lefteqn{M_p(\rho_a)}\nonumber
\\&=& \int_{(B\times \R)\cup 
(\H^2\setminus B)\times \R} \!\!\mu_{2,p}(z_1,z_2,q,\l) 
\big[\mfr12 \mathrm{cd}^p(z_1) + \mfr12\mathrm{cd}^p(z_2)\big]
   \, d\rho_a(z_1) d\rho_a(z_2) d\nu_a(q)\nonumber
\\
&\le&C + (1-\mu) \int_{\H^2\setminus B} 
      \big[\mfr12 \mathrm{cd}^p(z_1) + \mfr12\mathrm{cd}^p(z_2)\big]\,
      d\rho_a(z_1) d\rho_a(z_2)\nonumber
\\
&=&C+(1-\mu) M_p(\rho_a)\,,
\end{eqnarray}
where $C$ is a finite constant. This implies $M_p(\rho_a)<C/\mu<\infty$. Our
assumption that $\bar{\mu}_{2,p}(z_1,z_2,\l)\le1-\mu<1$ is not quite true. But this
averaging was essential in a similar situation in the proof of Theorem 
\ref{twoperiodic}, see \cite{FHS3}.

In order to obtain an estimate of a corresponding function 
$\mu_{3,p}(\boldsymbol z,\boldsymbol q,\l)\le1-\mu<1$ for $\boldsymbol z = 
(z_1,z_2,z_3)$ in a neighborhood of the boundary of $\H^3$, for $\boldsymbol q=
(q_1,q_2)$ in a small square with center at $\boldsymbol 0$, and for all 
$\l\in (-2\sqrt{2},2\sqrt{2})$ we use the recursion relation one more time.
Before we define this function $\mu_{3,p}$ we extend
$\mu_{2,p}$ to an upper semi-continuous function onto the boundary of $\H^2$ 
(in terms of the $z$ variables) via a radial compactification of $\C^2$
(in terms of the $u$ variables). To this end, let $r>0,(\omega_1,\omega_2)
\in\C^2$ so that
\begin{eqnarray}\label{blowup}
\frac{1}{u_1} = r\omega_1\,,\quad \frac{1}{u_1} = r\omega_1\,,\quad
|\omega_1|^2 + |\omega_2|^2 =1\,.
\end{eqnarray}
Then,
\begin{eqnarray}
\lefteqn{\mu_{2,p}(z_1,z_2,q,\l) =}\\
&&\frac{\Big(\mfr12 |\langle(\omega_2,\omega_1),v\rangle|^2
-q\,r\,\Re\langle(\omega_2,\omega_1),v\rangle +
\mfr12\frac{q^2|\omega_1\omega_2|^2}{
|\omega_1^2(z_1-z_\l)|^2 + |\omega_2^2(z_2-z_\l)|^2}\Big)^p}{\mfr12 
|\omega_1|^{2p} + \mfr12 |\omega_2|^{2p}}\,.\nonumber
\end{eqnarray}
Now we define for $(k_1,k_2)\in\partial\H^2$ and any sequence
$(z_1,z_2)_n$ in $\H^2$ that converges to $(k_1,k_2)$, 
\begin{eqnarray}\mu_{2,p}(k_1,k_2,q,\l):=
\limsup_{(z_1,z_2)_n\to(k_1,k_2)}\mu_{2,p}(z_1,z_2,q,\l)\,. 
\end{eqnarray}
As a next step we define the beforementioned function $\mu_{3,p}$. 
First, let $\H\ni z_i\not=z_\l$ and $\boldsymbol q = (q_1,q_2)$, then
\be \mu_{3,p}(\boldsymbol z,\boldsymbol q,\l):=\sum_\s
\frac{\mathrm{cd}^p\big(\Psi(z_{\s_1},\Psi(z_{\s_2},z_{\s_3},q_2,\l),q_1,\l)
\big)}{\mathrm{cd}^p(z_1)+\mathrm{cd}^p(z_2)+\mathrm{cd}^p(z_3)}\,,
\ee
where $\s\in\Sigma:=\{(1,2,3),(2,3,1),(3,1,2)\}$ runs over the cyclic 
permutations of $(1,2,3)$. The function $\mu_{3,p}$ can be expressed in terms 
of $\mu_{2,p}$ and the auxillary function 
\beax n_j(\boldsymbol z):=\frac{\mathrm{cd}^p(z_j)}{\mathrm{cd}^p(z_1)+
\mathrm{cd}^p(z_2) + \mathrm{cd}^p(z_3)}\,,\quad j=1,2,3\,.
\eeax
Namely, 
\bea\label{mu3p}
\lefteqn{\mu_{3,p}(\boldsymbol z,\boldsymbol q,\l) =  
\sum_{\s\in\Sigma}\mu_{2,p}\big(z_{\s_1},\Psi(z_{\s_2},z_{\s_3},q_2,\l),q_1,
\l\big)}\\
&& \qquad\qquad\times\;\Big(\mfr12 n_{\s_1}(\boldsymbol z) + \mfr14 \mu_{2,p}
(z_{\s_2},z_{\s_3},q_2,\l)(n_{\s_2}(\boldsymbol z) + n_{\s_2}(\boldsymbol z)
\Big)\,.
\nonumber\eea
Then, like for $\mu_{2,p}$ above, we extend $\mu_{3,p}$ to an upper semi-continuous 
function onto the boundary of $\partial\H^3$ by taking a $\limsup$. 
Now, $\mu_{3,p}(\boldsymbol z,\boldsymbol 0,\l)\le 1-2\mu<1$ for some $\mu>0$. 
By the compactness of the boundary of $\H^3$ and the upper semi-continuity of $\mu_{3,p}$ we 
finally get the pointwise estimate $\mu_{3,p}(\boldsymbol z,\boldsymbol q,\l)\le1-
\mu$ for $\boldsymbol z$ near the boundary of $\H^3$, small $\boldsymbol q$, and 
$\l\in(-2\sqrt{2},-2\sqrt{2})$.
\end{proof}

\begin{rem} This proof is now much easier to generalize to higher branched 
trees, $T_k$, with $k\ge3$, which was first accomplished by Halasan in her
thesis~\cite{Hal}. In that case, $\Psi(z_1,\ldots,z_k,0,\l):=-1/(z_1+\cdots+z_k+\l)$
with fixed point $z_\l:=-\l/(2k) + \i\sqrt{1/k - \l^2/(4k^2)}$. Using $u =
(u_1,\ldots,u_k)$ with $u_j:=(z_j-z_\l)/\sqrt{y_j}$ and $v=(v_1,\ldots,v_k)$ 
with $v_j:=\sqrt{y_j/(y_1+\cdots+y_k)}$ we see that
$$\mathrm{cd}(\Psi(z_1,\ldots,z_k,0,\l)) =\mfr1k\,
\frac{|z_1+\cdots+z_k-kz_\l|^2}{y_1+\cdots+y_k+\Im(\l)} < 
\mfr1k\,\big|\langle u,v\rangle\big|^2\,.
$$
Therefore, by the same arguments as above and with $p>1$,
\bea\mu_{2,p}(z_1,\ldots,z_k,0,\l) &:=&\frac{\mathrm{cd}^p(\Psi(z_1,\ldots,z_k,0,
\l))}{\mfr1k\,\cd^p(z_1)+\cdots+\mfr1k\,\cd^p(z_k)} \\
&\le&\frac{\Big(\mfr1k\,\big|\langle u,v\rangle\big|^2\Big)^p}{\mfr1k\,
|u_1|^{2p}+\cdots + \mfr1k\,|u_k|^{2p}} \;\le\;1\,,\nonumber
\eea
with equality if $u=sv$ and $|u_1|=\ldots =|u_k|$. The functions $\mu_{3,p}$ 
and $n_j$ have to be changed accordingly. 
\end{rem}

In our first paper \cite{FHS1}, we attempted to construct a ``large" set of 
deterministic potentials on a (rooted) binary tree that yield ac spectrum. 
Since almost always spherically symmetric potentials cause localization we 
considered potentials that oscillate very rapidly within each sphere. The 
basic example is the following potential, $q_0$: Take vertices $v\not=w$ in the 
$n$-th sphere and $u\in S_{n-1}$ so that $(u,v)\in E$ and $(u,w)\in E$. For 
some $\d\in\R$, let $q_0(v):=\d$ and $q_0(w):=-\d$. Then continue this for every 
sphere $S_n$ except for $n=0$, where we may define $q_0(0)$ arbitrarily. 
$\l\in\R$ is in the interior of the ac spectrum of $-\D+q_0$ if and only if the 
polynomial $p(z):=z^3+2\l z^2+(2+\l^2-\d^2)z+2\l$ has two non-real, 
complex-conjugate roots and one real root. 
 
An interesting extension arises when the value $\d$ is allowed to
depend on the radius, $n$. In other words, let $\d_0>0$ be fixed and let 
$\d_1,\d_2$ be real-valued functions on $\N$. Then for vertices $v\not=w$ in 
the $n$-th sphere as above, we set $q(v):=\d_0+\d_1(n)$ and $q(w):=-\d_0+\d_2(n)$. 

\begin{prop}[\!\!\cite{FHS1}, Proposition 4.1]
\label{det two-period} Let $q$ and $q_0$ be the above potentials 
and let $\l\in\s(-\D+q_0)$. Then for $\|\d_1\|_\infty+\|\d_2\|_\infty$ small 
enough depending on $\d_0$, the Green function of $-\D+q$, $G_\l(0,0)$, is bounded. 
\end{prop}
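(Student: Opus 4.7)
The plan is to exploit the binary tree's symmetry to collapse the matrix recursion (\ref{eq23}) to a scalar non-autonomous recursion on $\H$, and then to run a contraction argument around the fixed point of the unperturbed map. Since every vertex of $T_2$ has exactly one $+$-child and one $-$-child, and since the subtree below any vertex $v\in S_n$ depends only on $n$ and on the type (sign) of $v$, Theorem \ref{thm6} implies that the truncated diagonal Green function assumes two values $\a_n^{\pm}\in\H$ at depth $n$, related by
$$
\a_n^+=-\frac{1}{s_{n+1}+\l-\d_0-\d_1(n)}\,,\quad
\a_n^-=-\frac{1}{s_{n+1}+\l+\d_0-\d_2(n)}\,,
$$
where $s_n:=\a_n^++\a_n^-$ satisfies the scalar recursion $s_n=F_n(s_{n+1})$ with
$$
F_n(z):=-\frac{1}{z+\l-\d_0-\d_1(n)}-\frac{1}{z+\l+\d_0-\d_2(n)}\,.
$$
Since $G_\l(0,0)=-1/(s_1+\l-q(0))$, it suffices to confine $s_1$ to a compact subset of $\H$.

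Write $F$ for the unperturbed map ($\d_1=\d_2=0$). The fixed-point equation $F(s)=s$ simplifies to $p(s)=0$, so by hypothesis $F$ has a fixed point $s_\l\in\H$. As a degree-two rational function (its simplification $-2(z+\l)/((z+\l)^2-\d_0^2)$ has a denominator of degree two), $F$ is not a M\"obius automorphism of $\H$, so by Schwarz-Pick it is a strict hyperbolic contraction at $s_\l$, with hyperbolic derivative $c:=|F'(s_\l)|_{\mathrm{hyp}}<1$. Hence one can choose a closed hyperbolic ball $U\ni s_\l$ and $c'\in(c,1)$ such that $F$ is $c'$-Lipschitz on $U$ with $F(U)\subset U$. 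For the perturbed maps, continuous dependence on parameters yields $\sup_{z\in U}\mathrm{d}(F_n(z),F(z))\le C(|\d_1(n)|+|\d_2(n)|)$, so if $\|\d_1\|_\infty+\|\d_2\|_\infty$ is small enough (depending on $c'$ and on the radius of $U$, hence on $\d_0$ and $\l$), the triangle inequality $\mathrm{d}(F_n(z),s_\l)\le \mathrm{d}(F_n(z),F(z))+c'\,\mathrm{d}(z,s_\l)$ gives $F_n(U)\subset U$ uniformly in $n$.

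To conclude, Theorem \ref{thm6} lets us write $G_\l(0,0)=\lim_{N\to\infty}\Phi_0\circ\cdots\circ\Phi_N(Z_{N+1})$ for any diagonal initial $Z_{N+1}\in{\mathbb S\mathbb H}_{d_{N+1}}$. Choose $Z_{N+1}$ so that its $\pm$-type entries equal $\beta^\pm:=-1/(s_\l+\l\mp\d_0)$; then the associated sum is $s_{N+1}=\beta^++\beta^-=s_\l\in U$. The self-mapping property of $F_n$ on $U$ gives $s_n\in U$ for all $1\le n\le N+1$, uniformly in $N$, so in the limit $s_1\in\overline{U}\subset\H$, which yields boundedness of $G_\l(0,0)$. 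The main obstacle is quantifying both the contraction rate $c$ and the radius of $U$: they degenerate as $\l$ approaches the edges of $\s(-\D+q_0)$, where $s_\l$ migrates to $\partial\H$ and $|F'(s_\l)|_{\rm hyp}$ tends to one. This degeneration forces the smallness required of $\|\d_1\|_\infty+\|\d_2\|_\infty$ to depend on $\d_0$ and on the distance of $\l$ to those edges, which is exactly the dependence asserted in the proposition.
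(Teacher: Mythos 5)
Your core mechanism is sound, and since this survey states the proposition without proof (citing \cite{FHS1}), the comparison is with that reference: collapsing the recursion \eqref{eq23} to the scalar map $s_n=F_n(s_{n+1})$ for the sum of the two type-dependent forward Green functions, identifying the unperturbed fixed point with the root of the cubic $p$ in $\H$, and using strict Schwarz--Pick contraction at that fixed point to build a closed hyperbolic ball that remains invariant under all the perturbed maps once $\|\d_1\|_\infty+\|\d_2\|_\infty$ is small, is exactly the right kind of invariant-ball argument for this setting --- note in particular that Proposition \ref{prop4} is unavailable here because the perturbation is only small in sup norm, not summable, so some such stability-of-the-hyperbolic-fixed-point argument is forced.

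Two points need repair. First, you invoke Theorem \ref{thm6} and manipulate $G_\l(0,0)$ directly at $\l\in\s(-\D+q_0)\subset\R$, but Theorem \ref{thm6} requires $\Im(\l)>0$, and ``bounded'' for real $\l$ means $\limsup_{\eps\downarrow0}|G_{\l+\i\eps}(0,0)|<\infty$. You must run the invariant-ball argument for the maps built from $\l+\i\eps$, e.g.\ by checking $\sup_{z\in U}\mathrm{d}\big(F_n^{\l+\i\eps}(z),F_n^{\l}(z)\big)\le C\eps$ so that $U$ stays invariant uniformly for small $\eps$, and only then let $\eps\downarrow0$; this is routine but cannot be skipped. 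Second, your argument uses that the cubic has a root $s_\l$ in the \emph{open} half-plane, which by the paper's own characterization corresponds to $\l$ in the interior of the ac spectrum of $-\D+q_0$, not to an arbitrary $\l\in\s(-\D+q_0)$: at a band edge the complex-conjugate roots merge on $\R$, $s_\l$ reaches $\partial\H$, $|F'(s_\l)|\to1$, and both the contraction rate and the radius of $U$ degenerate. Accordingly, the smallness threshold you produce depends on $\l$ (through $c$ and the radius of $U$) as well as on $\d_0$, which is weaker than the literal claim ``small enough depending on $\d_0$''; you should either restrict $\l$ to a compact subset of the interior of the ac spectrum of $-\D+q_0$ and say so, or supply an additional argument for uniformity in $\l$, rather than asserting that your $\l$-dependent threshold ``is exactly the dependence asserted''.
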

However, these potentials (and some 
modifications thereof) are still a set of measure zero. An an explicit 
construction of a ``large" set (that is, of positive measure) remains an open 
problem.

\bigskip
In \textbf{percolation} models, one is usually interested in the occurence of infinite
clusters. A more sophisticated question is whether the spectrum of the
adjacency matrix (of the remaining graph) has an ac component. Let us start 
with the (rooted) binary tree $T_2=(V,E)$, and let $q\in[0,1)$. At every vertex 
$v\in V$, say $v\in S_n$ for some $n\in\N_0$, we delete one and only one
(forward) edge $(v,v')\in E$ or $(v,v'')\in E$ with $v',v''\in S_{n+1}$ with 
probability $q$. With probability $1-q$ we keep both (forward) edges 
$(v,v'),(v,v'')$ in the set of edges. This defines a probability measure,
$\nu_q$, on $\O:=\{0,1\}^E$, which is characterized by (we write $\o_{uv}:=
\o((u,v))$)
\begin{enumerate}
\item[(i)] $\nu_q\big(\{\o\in\O:\o_{vv'} = \o_{vv''}=1\}\big) = 1-q$ for all 
$v\in V$;
\item[(ii)] $\nu_q\big(\{\o\in\O:\o_{vv'} = 0, \o_{vv''}=1\}\big) = 
\nu_q\big(\{\o\in\O:\o_{vv'} = 1,\o_{vv''}=0\}\big) = q/2$ for all $v\in V$;
\item[(iii)] for all $u,v\in V$ with $u\not=v$ the random variables 
$(\o_{uu'},\o_{uu''})$ and $(\o_{vv'},\o_{vv''})$ are independent. 
\end{enumerate}

For every $\o\in\O$, we define the adjacency matrix of the remaining random
graph, 
\bea \D_\o:\ell^2(V)\to \ell^2(V)\,,\quad 
(\D_\o f)(v) := \sum_{u\in V} \D_\o(u,v) f(u)\,,\quad f\in  \ell^2(V)
\eea
with matrix kernel
\be \label{def:lap1}
\D_\o(u,v) :=\left\{\begin{array}{lcc} 1&\mbox{ if } \o_{uv} =1\\
0&\mbox{ otherwise }\end{array} \right.\,,\quad u,v\in V\,.
\ee 

\begin{thm} \label{thm:perc}
For every $0\le\l<2\sqrt{2}$ there exists a $q_0>0$ such that for 
all $0\le q\le q_0$, $[-\l,\l]\subseteq \s_{ac}(\D_\o)$ $\nu_q$-almost surely. 
Furthermore, the spectrum is purely ac on $(-\l,\l)$ $\nu_q$-almost surely.
\end{thm}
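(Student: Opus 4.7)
The plan is to adapt the recursive moment argument from the proof of Theorem \ref{ext states conj}. Let $\rho_q$ denote the probability law on $\H$ of $G_\l(0,0)=(\D_\o-\l)^{-1}(0,0)$ at the root of the percolated binary tree. Conditioning on the status of the two forward edges at the root and using self-similarity of the percolation together with the recursion \eqref{eq23} (which preserves diagonality on the tree, as in \eqref{def:Psi}), one obtains the distributional fixed-point relation
\be
\rho_q \;=\; (1-q)\,(\rho_q\times\rho_q)\circ\Psi_2^{-1} \;+\; q\,\rho_q\circ\Psi_1^{-1}\,,
\ee
where $\Psi_2(z_1,z_2):=-1/(z_1+z_2+\l)$ is the map \eqref{def:Psi} at zero potential, and $\Psi_1(z):=-1/(z+\l)$ is the one-dimensional M\"obius transformation \eqref{eq14} governing a vertex with a single surviving child.

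Fix a compact $K\subset(-2\sqrt{2},2\sqrt{2})$, let $z_\l$ be the fixed point \eqref{fixpnt} of $\Psi_2(\cdot,\cdot)$, and use the weight $\cd$ from \eqref{def:cd2}. The goal is a uniform bound on $M_p(\rho_q):=\int_\H\cd^p(z)\,d\rho_q(z)$ for some $p>1$, as $\l$ ranges over $K+\i[0,1]$; once this is established, the criterion \eqref{eq7} combined with Fatou and Fubini in $\l$ gives almost sure pure ac spectrum of $\D_\o$ on $\mathrm{int}(K)$, and letting $K\uparrow(-2\sqrt{2},2\sqrt{2})$ yields the theorem. The recursion splits $M_p(\rho_q)$ into a binary piece of weight $1-q$ and a linear piece of weight $q$. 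The binary piece is treated exactly as in the proof of Theorem \ref{ext states conj}: factor the integrand as $\mu_{2,p}(z_1,z_2,0,\l)\cdot[\mfr12\cd^p(z_1)+\mfr12\cd^p(z_2)]$ with $\mu_{2,p}\le 1$, and iterate once more to produce the function $\mu_{3,p}$, which at zero potential is strictly less than $1-2\mu$ on $\partial\H^3$; upper semi-continuity and compactness of the boundary then give the effective contraction estimate
\be
(1-q)\int_{\H^2}\cd^p(\Psi_2(z_1,z_2))\,d\rho_q(z_1)d\rho_q(z_2) \;\le\; (1-q)(1-\mu)\,M_p(\rho_q)+C_0\,.
\ee

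The new ingredient is the linear piece. When $\Im\l=0$, the map $\Psi_1$ is the M\"obius action of $\left[\!\!\begin{array}{cc}0&-1\\1&\l\end{array}\!\!\right]\in\mathrm{SL}(2,\R)$, and is therefore a hyperbolic isometry of $(\H,\mathrm d)$. Using $\cd(z)=2\Im(z_\l)(\cosh\mathrm d(z,z_\l)-1)$ together with the triangle inequality $\mathrm d(\Psi_1(z),z_\l)\le\mathrm d(z,z_\l)+\mathrm d(z_\l,\Psi_1(z_\l))$ and $\cosh(a+b)\le e^b\cosh a$ yields
\be
\cd^p(\Psi_1(z)) \;\le\; C_1\,\big(1+\cd^p(z)\big)\,,
\ee
with $C_1=C_1(K,p)$ depending only on the hyperbolic displacement $\mathrm d(z_\l,\Psi_1(z_\l))$, uniformly for $\l\in K+\i[0,1]$. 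Combining with the binary estimate gives
\be
M_p(\rho_q) \;\le\; C_0+qC_1+\big[(1-q)(1-\mu)+qC_1\big]\,M_p(\rho_q)\,,
\ee
and choosing $q_0=q_0(K)>0$ small enough that the bracket is strictly less than $1$ produces the desired uniform bound $M_p(\rho_q)\le C$.

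The main obstacle is the linear branch: $\Psi_1$ loses its fixed point in $\H$ once $|\l|>2$, so the displacement $\mathrm d(z_\l,\Psi_1(z_\l))$, and therefore $C_1$, blows up as $|\l|\uparrow 2\sqrt 2$. Consequently $q_0(\l)$ must shrink to $0$ at the band edge, which matches the $\l$-dependent threshold in the statement. A minor technical point is that the two-step binary iteration has to be carried out alongside the linear mixing; the resulting mixed $O(q)$ cross-terms are controlled by the same isometry estimate and can be absorbed into the constants above.
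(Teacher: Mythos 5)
Your sketch follows essentially the same route as the paper's proof: the moment $M_p(\rho_q)$ with the weight $\cd$, the recursion splitting into a binary piece of weight $1-q$ (controlled by the two-step $\mu_{3,p}\le 1-2\mu$ contraction borrowed from the tree proof) plus $O(q)$ terms that are merely bounded, and then taking $q$ small --- this is precisely the paper's expansion $\mu_{3,p,q}=(1-q)^2\mu_{3,p}+qR$ with $|R|\le C_K$ off a compact set. Your hyperbolic-isometry estimate for $\Psi_1$ (via $\cd(z)=2\Im(z_\l)(\cosh\mathrm{d}(z,z_\l)-1)$) is simply an explicit justification of the boundedness of the remainder that the paper asserts, so the two arguments coincide in substance.
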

This particular model was suggested to one of us by Shannon Starr to whom we are
grateful.
Before we enter into some details of the proof let us start with some definitions.
For $v\in S_n\subset V$, let $\mathcal G_v=(V_v,E_v)$ be the binary graph $T_2=(V,E)$ 
truncated at $v$, that is, the largest connected subgraph of $T_2$ that 
contains $v$ but no $u\in S_k$ with $k<n$ (or simply the binary tree with root $v$); 
this truncation is different from the one in Section \ref{higher dimension}. For 
$\omega\in\O$ and $v\in V$ we define the truncated adjacency matrix, 
\be \D_\o^{(v)}:\ell^2(V_v)\to \ell^2(V_v)\,,\quad 
(\D_\o^{(v)} f)(u) := \sum_{r\in V_v} \D_\o(r,u) f(r)\,,\quad u\in V_v\,,
f\in  \ell^2(V_v)\,.
\ee 
Furthermore, for $\l\in\H$, we define the two Green functions
\bea
G(\o,\l)&:=& (\D_\o - \l)^{-1}(0,0)\,,
\\
G^{(v)}(\o,\l)&:=& (\D_\o^{(v)} - \l)^{-1}(v,v)
\eea
as the kernels of the respective resolvents. We have $G(\o,\l) = 
G^{(0)}(\o,\l)$. The recursion formula for $G^{(v)}(\o,\l)$ is
\be \label{Greenrecursion}
G^{(v)}(\o,\l) = -(G^{(v')}(\o,\l) + G^{(v'')}(\o,\l)+\l)^{-1}\,.
\ee
Finally, let
\be \rho_{\l,q}^{(v)}:=\nu_q \circ G^{(v)}(\cdot,\l)^{-1}
\ee
be the Green probability distribution defined as the image of the measure 
$\nu_q$ under the map $\o\mapsto G^{(v)}(\o,\l)$ from $\O$ to $\H$. By 
translation-invariance, the measure $\rho_{\l,q}^{(v)}$ does, in fact, not 
depend on $v$, and we shortly write $\rho_{q}$ by also suppressing the spectral
parameter $\l$. 

\begin{proof}[Sketch of proof of Theorem \ref{thm:perc}] Using the weight function 
$\mathrm{cd}$ from \eqref{def:cd2} with the same $z_{\l}$ and $p>1$ we define 
the moment
\be M_p(\rho_{q}) := \int_\H \mathrm{cd}^{p}(z)\, d\rho_{q}(z) \,.
\ee
Applying the recursion relation \eqref{Greenrecursion} and the symmetry between
the variables $z_1$ and $z_2$ below we have
\be M_p(\rho_{q})=\int_{\H^2} \mu_{2,p,q}(z_1,z_2,\l)\,\big[\mfr12\,
\mathrm{cd}^{p}(z_1) + \mfr12\,\mathrm{cd}^{p}(z_2)\big] \,d\rho_{q}(z_1) 
d\rho_{q}(z_2)
\ee
with $\mu_{2,p,q}(z_1,z_2,\l):=[{q} \,\mathrm{cd}^{p}(-1/(z_1+\l)) + (1-q) \,
\mathrm{cd}^{p}(-1/(z_1+z_2+\l)^{-1})]$. Then, as in \eqref{mu2p}, we apply
once more the recursion relation and write the result in the form
\bea \lefteqn{M_p(\rho_{q})=}\\
&&\int_{\H^3} \mfr13\mu_{3,p,q}(z_1,z_2,z_3,\l)\,\big[\mathrm{cd}^p(z_1) + 
\mathrm{cd}^{p}(z_2)+\mathrm{cd}^{p}(z_3)\big] 
\,d\rho_{q}(z_1) d\rho_{q}(z_2)d\rho_{q}(z_3)\nonumber\,.
\eea
The function $\mu_{3,p,q}(z_1,z_2,z_3,\l)$ is expanded as a function of $q$ so
that
\be \mu_{3,p,q} = (1-q)^2 \mu_{3,p} + q R\,,
\ee
where $\mu_{3,p}$ is the function in \eqref{mu3p} with $q_1=q_2=0$ and $|R|\le C_K$ on 
$\H^3\setminus K$ for a compact set $K$. For $q$ small enough we achieve that
$(1-q)^2 \mu_{3,p} + q R \le (1-\mu/2)$ outside such a compact set $K$ with 
$\mu>0$. Hence, $M_p(\rho_{q})\le C/(1-\mu/2)$.
\end{proof}

\begin{rems}
\begin{enumerate}
\item[(i)]
We do not know the full spectrum of the adjacency matrix, $\D_\o$,
nor do we have information on the remaining (point) spectrum. 
\item[(ii)]
In this percolation model, there is always an infinite cluster even when $q=1$.
This is in contrast to the genuine bond-percolation tree model, where an edge 
is deleted with probability $q$ independently of other edges. Here, the
percolation threshold for the existence of an infinite cluster is $q_c=1/2$, see
\cite{Lyons}.
This model seems harder to analyze, at least
from the standpoint of our method. The reason is that the point spectrum is
dense in the full spectrum of the random percolation graph since almost surely 
there are arbitrarily large subtrees disconnected from the random graph for 
which the spectrum lies inside $(-2\sqrt{2},2\sqrt{2})$. Thus there is no
interval of pure ac spectrum if it happens to exist at all. Besides, we are not
aware of a conjectured value for a critical (quantum percolation) value 
$q_{\rm{qp}}$ up to which the adjacency matrix has an ac component; 
$q_{\rm{qp}}\le 1/2$ since an infinite cluster is required to exist.
\end{enumerate}
\end{rems}

\section{Strongly correlated random potential on a tree}
\label{strongly correlated}

There is a large gap between the known results for the tree and the open 
problem on $\Z^d$ for $d\ge3$. Therefore it seems worthwhile to address some of 
the problems that would come up on $\Z^d$ in simpler toy models. In order to 
see a strong effect of correlations we consider a transversely 2-periodic 
random potential. The potential is defined by choosing two values $\boldsymbol q
= (q_1,q_2)$ of the potential at random, independently for each sphere in the 
tree. These two values are then repeated periodically across the sphere and
hence the potential is strongly correlated. Such a two-periodic potentials 
can exhibit either dense point spectrum or absolutely continuous spectrum
depending on the correlations of $q_1$ and $q_2$.

We will prove that if the values of $q_1$ and $q_2$ are sufficiently 
uncorrelated (see assumption (\ref{nuassn2}) below) then there will be some 
ac spectrum, as is the case for the iid Anderson model. However, since in some 
sense this model is so close to being one-dimensional, the proof has some 
features not appearing in the tree model of Section \ref{tree}. This time, 
the proof follows from an estimate of an \emph{average} over potential values 
$\boldsymbol q$ of functions $\mu(z,\boldsymbol q)$, similar in both models, 
that measure the contraction of a relevant map of the plane. We seek an 
estimate of the form $\int \mu(z,\boldsymbol q)\, d\nu_a(\boldsymbol q) < 1$ for 
$z$ near the boundary of $\H$ at infinity. In the proof of Theorem \ref{ext states conj}
we have used the independence of the potentials across the sphere in proving that
$\mu(z,\boldsymbol 0)$ is already less than one. Then small values of 
$\boldsymbol q$ in the integral are handled by semi-continuity. In the present
situation, $\mu(z,\boldsymbol 0)=1$ and perturbations in $\boldsymbol q$ send 
it in both directions. Thus we must use cancellations in the integral
over $\boldsymbol q$ in an essential way. 

Our method extends to the case where the joint distributions are not identical, as long
as they are all centered and satisfy certain uniform bounds. This is 
significant since in this case we lose the self-similarity that has been used 
in previous proofs.  

We make the following assumptions about the measure $\nu$. First, it has 
compact support, and for simplicity,
\be\label{nuassn1}
\nu \mbox{ is supported in } \{\boldsymbol q = (q_1,q_2) : |q_1|\le 1, 
|q_2|\le 1\}\,.
\ee
Then, the measure is centered on zero:
\be\label{nuassn3}
\int_{\R^2} (q_1 + q_2) \,d\nu(\boldsymbol q) = 0\,.
\ee
Let $c_{ij} := \int_{\R^2} q_iq_j \,d\nu(\boldsymbol q)$. Then finally, 
\be\label{nuassn2}
c:=c_{11}+c_{22} >0 \quad\mbox{ and }\quad \delta := \frac{2c_{12}}{c_{11}+
c_{22}} < 1/2\,.
\ee
The first inequality in \eqref{nuassn2} simply says that $\boldsymbol q$ is not
identically zero. The second is a bound on the correlation. Completely 
correlated potentials (that is, the one-dimensional case where the spectrum is 
localized) would correspond to $\delta = 1$.


We have proved the following theorem.
\begin{thm}[\!\!\cite{FHS3}, Theorem 2]\label{twoperiodic}
Let $\nu_{(0)}$ be a probability measure of bounded support for the potential 
at the root, let $\nu$ be a probability measure on $\R^2$ satisfying 
\eqref{nuassn1}, \eqref{nuassn3} and \eqref{nuassn2} and let $H_a$ be the random 
discrete Schr\"odinger operator on the binary tree corresponding to the 
transversely two-periodic potential defined by the scaled measure 
$\nu_a$. There exists $\lambda_0 \in (0,2\sqrt{2})$ such that for sufficiently 
small $a$ the spectral measure for $H_a$ corresponding to $\delta_0$ has 
purely ac spectrum in $(-\lambda_0,\lambda_0)$.
\end{thm}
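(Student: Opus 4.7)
The plan is to follow the strategy used in the proof of Theorem~\ref{ext states conj}, adapted to the transversely two-periodic setting. The crucial novelty is that the unperturbed contraction factor now equals $1$ on part of the boundary of $\H^2$, so I must exploit cancellations produced by the centering assumption \eqref{nuassn3} together with the correlation bound \eqref{nuassn2}, rather than relying on the pure semi-continuity argument that sufficed for the iid tree model.

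First I would set up the correct Green-function recursion. Because the potential is transversely $2$-periodic with values $\boldsymbol q = (q_1,q_2)$ on each sphere (drawn independently from sphere to sphere from the scaled measure $\nu_a$), the truncated Green matrix is itself $2$-periodic within each sphere and is encoded by a pair $(z_1,z_2) \in \H^2$. The recursion \eqref{eq23} then reduces to the pair of M\"obius maps
\[
z_j' = -\bigl(z_1+z_2+\l-q_j\bigr)^{-1}, \qquad j=1,2,
\]
which fix $z_1=z_2=z_\l$ when $\boldsymbol q = \boldsymbol 0$, with $z_\l$ from \eqref{fixpnt}. Let $\rho_a$ denote the joint law on $\H^2$ of this pair at a deep sphere; iid-ness across spheres makes $\rho_a$ satisfy an exact self-consistency relation under the above map. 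Using the weight $\mathrm{cd}$ from \eqref{def:cd2}, I would study the symmetric moment
\[
M_p(\rho_a) := \int_{\H^2} \bigl[\mathrm{cd}^p(z_1)+\mathrm{cd}^p(z_2)\bigr]\, d\rho_a(z_1,z_2),
\]
for some $p>1$. Applying the self-consistency relation and integrating against $\nu_a$ rewrites $M_p(\rho_a)$ as the same integral with the integrand multiplied by a ratio $\mu(z_1,z_2,\boldsymbol q,\l)$ analogous to $\mu_{2,p}$ in \eqref{mu2p}. A direct Cauchy--Schwarz computation shows that $\mu(z_1,z_2,\boldsymbol 0,\l) \le 1$, with equality attained in the limit along the diagonal at the boundary of $\H^2$. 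Pointwise contraction at $\boldsymbol q = \boldsymbol 0$ thus fails, and I must instead work with the averaged factor $\bar\mu(z_1,z_2,\l) := \int_{\R^2} \mu(z_1,z_2,\boldsymbol q,\l)\, d\nu_a(\boldsymbol q)$.

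The heart of the argument, and the main obstacle, is proving that
\[
\bar\mu(z_1,z_2,\l) \le 1 - \kappa\, a^2
\]
uniformly in $\l \in (-\lambda_0,\lambda_0)$ and for $(z_1,z_2)$ in a boundary neighbourhood of $\H^2$, for some $\kappa > 0$ and some $\lambda_0 \in (0, 2\sqrt 2)$. To do this I would radially compactify $\H^2$ in the $u$-coordinates $u_j = (z_j-z_\l)/\sqrt{y_j}$ as in \eqref{blowup}, and then Taylor expand $\mu(\,\cdot\,,\,\cdot\,,\boldsymbol q,\l)$ to second order in $\boldsymbol q$ about $\boldsymbol 0$. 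The linear term vanishes upon averaging by \eqref{nuassn3}, while the quadratic term contributes $a^2 \sum_{i,j} c_{ij}\, Q_{ij}(z,\l)$, with $Q_{ij}$ built from the second partial derivatives of $\mu$ at $\boldsymbol q = \boldsymbol 0$. Exploiting the $z_1\leftrightarrow z_2$ symmetry of $\mu$, I expect this to be organizable as $-a^2(c_{11}+c_{22})\Phi(z,\l) + 2a^2 c_{12}\Psi(z,\l)$, where $\Phi\ge 0$ is strictly positive in the boundary neighbourhood and $|\Psi|\le \Phi$ there. The assumption $\delta = 2c_{12}/(c_{11}+c_{22}) < 1/2$ then renders this contribution strictly negative, while the compact support hypothesis \eqref{nuassn1} together with the scaling by $a$ lets the cubic and quartic remainders be absorbed for $a$ sufficiently small. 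The threshold $\lambda_0$ arises as the largest value for which $\Phi(z,\l)$ admits a uniform lower bound on the relevant boundary neighbourhood.

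Once $\bar\mu \le 1 - \kappa a^2$ holds on the complement of a compact set $B\subset\H^2$ containing $(z_\l,z_\l)$, the argument closes exactly as in the proof of Theorem~\ref{ext states conj}: splitting the domain of integration into $B$ and $\H^2\setminus B$ and using that $\mathrm{cd}^p$ is bounded on $B$ yields
\[
M_p(\rho_a) \le C_B + (1-\kappa a^2)\, M_p(\rho_a),
\]
so that $M_p(\rho_a) \le C_B/(\kappa a^2) < \infty$ uniformly as $\Im \l \downarrow 0$. One further application of the recursion, together with integration against the root measure $\nu_{(0)}$, transfers this uniform moment bound to $\E[\mathrm{cd}^p(G_{\l,a}(0,0))]$, and the criterion \eqref{eq7} then yields that the spectral measure of $H_a$ corresponding to $\d_0$ is almost surely purely absolutely continuous on $(-\lambda_0,\lambda_0)$.
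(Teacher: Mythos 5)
Your proposal follows essentially the same route that the paper itself outlines for this theorem (whose detailed proof is deferred to \cite{FHS3}): the self-consistent recursion for the transversely two-periodic forward Green functions, a $\mathrm{cd}^p$-moment bound driven by the \emph{averaged} contraction factor $\bar\mu$, with the centering condition \eqref{nuassn3} cancelling the linear term in $\boldsymbol q$ and the correlation bound \eqref{nuassn2} forcing the averaged quadratic term to be strictly negative of order $a^2$ near the boundary, and then the standard compact/non-compact splitting and transfer to the root to conclude purely ac spectrum on $(-\lambda_0,\lambda_0)$. The one caveat is that your guessed second-order structure (with $|\Psi|\le\Phi$) would already yield negativity for any $\delta<1$, so the exact form of the quadratic term --- which is precisely where the threshold $\delta<1/2$ and the value of $\lambda_0$ come from --- is not pinned down in your sketch; but that computation is exactly the technical content carried out in \cite{FHS3} and not reproduced in this paper either.
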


\begin{rem} When the random variables $q_1$ and $q_2$ are independent, that is,
when $\d=0$, our proof shows that $\l_0$ can be chosen to be 2. The 
determination of the maximum $\l_0$ remains an open problem.
\end{rem}

\section{Loop tree models}\label{loop tree}

There are several interesting ways to add loops to a tree which are
sometimes called decorated trees. Here we present three possibilities of 
adding new edges that connect vertices inside the same sphere.

In our first attempt we connect each vertex $(n,2^i)$ inside each sphere $S_n$ 
with $(n,2^{i}+1)$ and $(n,2^{i}-1)$ modulo $2^n$. That amounts to adding to the
adjacency matrix of the tree the adjacency matrices of the nearest neighbor 
chains $\{0,1,\ldots,2^n-1\}$ with periodic boundary conditions. We call this 
the regular loop tree model. Every vertex other than the root has five neighbors. In the next 
subsection we present the derivation of the fixed point equation that determines 
the spectrum at the root.  Finding the spectrum of this new adjacency matrix turns 
out to be difficult and remains an open problem. 

In a second attempt we modify these new connections to mean-field connections.
This new mean-field Laplacian can be solved explicitly so that we can take on
the next step and add a random potential. Here we limit ourselves to a
special case, namely to a two-periodic Bernoulli random potential that we have 
studied in the previous section. We present the model and the main result in 
Subsection \ref{sect:mean-field}. Proving ac spectrum for the Anderson model 
(that is, with iid random potential) on this mean-field tree model is still
an open problem.  

The third loop tree model was suggested to us by Laszlo Erd\"os. In its simplest
version, one adds to each sphere of the tree a single loop (of weight $\g$) 
that connects two arbitrarily chosen sites within a sphere. It would be 
interesting to prove the (in)stability of the ac spectrum for small $\g>0$.

\subsection{Regular loop tree model}

Each vertex $v$ in the $n$-th sphere of the binary tree, $T_2=(V,E)$, is of the
form $v=(n,j)$ with $0\le j\le 2^n-1$. We now also call $v,w\in S_n$ nearest 
neighbors if $w=(n,j\pm1\mod 2^n)$. The newly added edges are denoted by
$E^{\mathrm{rlt}}$. In order to compare with the usual adjacency matrix of the
tree we introduce a parameter $\g$ that puts the weight $\g$ on the new 
connections inside a sphere. The new adjacency matrix, $\D_\g$, is now defined
by the kernel 
\be \D_\g(v,w):=\left\{\begin{array}{cl}1&\mbox{ if }(v,w)\in E\\\g&\mbox{ if }
(v,w)\in E^{\mathrm{rlt}}\\0&\mbox{ else }\end{array}\right.\,.
\ee
Furthermore, let $\g D:=\D_\g-\D$, and let $D_n$ be $D$ restricted to $S_n$. 

For $n\in\N_0,N:=2^n$, and $\l\in\H$ we consider the generalized M\"obius
transformations $\Phi_n:\SH_{2N}\to \SH_{N}, \Phi_n(Z) := -(E_n^T Z E_n + \g 
D_n +\l )^{-1}$ between the respective Siegel half-spaces. When $\g\not=0$ 
then diagonal matrices are no longer mapped to diagonal matrices. An invariant 
subset of matrices that is preserved under this flow is the set of 
\emph{circulant} (or Toeplitz) matrices. Recall that an $N\times N$ matrix $Z$ 
is called circulant if $Z_{i,j} = z_{(j-i)\mathrm{ mod }N}$. That is,
$$ Z = \left[\begin{array}{ccccc}z_0&z_1&z_2&\cdots&z_{N-1}\\z_{N-1}&
       z_0&z_1&\cdots&z_{N-2}\\z_{N-2}&z_{N-1}&z_0&\cdots&z_{N-3}\\
       \vdots&\vdots&\vdots&\ddots&\vdots\\
       z_1&z_2&\cdots&z_{N-1}&z_0\end{array}
       \right]\,.
$$
Circulant matrices are characterized by the condition that they commute with 
the shift operator. Therefore we can diagonalize circulant matrices by the 
finite Fourier transform. 
The finite Fourier transform, $U_n\in\mathrm{Mat}(N,\C)$, is defined as
\be (U_n)_{j,k} := N^{-1/2} \,\e^{2\pi \i jk/N}\,, \qquad
j,k=0,1,\ldots,N-1\,,\; N=2^n\,.
\ee
Here are some simple properties.
\begin{lem} 
\begin{enumerate} 
\item Let $Z$ be an $N\times N$ circulant matrix with first row
$\boldsymbol z=[z_0,z_1$, $\ldots,z_{N-1}]$. For $j=0,1,\ldots,N-1$, let 
$f^{(n)}_j := \sum_{\ell=0}^{N-1} z_\ell \,\e^{2\pi \i \ell k/N}$. Then 
\be (U_n^* Z U_n)_{j,k} = \d_{jk}\,f^{(n)}_j\,, \quad j,k=0,1,\ldots,N-1 \,.
\ee
In particular, for the spherical Laplacian $D_n$ we have 
\be (U_n^* D_n U_n)_{j,k} = 2\,\d_{j,k} \cos\big(\mfr{2\pi j}{N}\big)\,, 
\quad j,k=0,1,\ldots,N-1\,.
\ee
\item For $j=0,1,\ldots,N-1,\ell=0,1,\ldots,2N-1$ we have
\be (U_{n}^* E^T_n U_{n+1})_{j,k} = 2^{-1/2} \big(1+\e^{2\pi \i k/2^{n+1}}\big)
   \,(\d_{j,k} +\d_{j+2^n, k})\,.
\ee
\item For $j,k=0,1,\ldots,N-1$ we have
\bea\label{7.6}
\lefteqn{\Big(U_n^*(E^T_n Z E_n + \g D_n + \l)^{-1} U_n \Big)_{j,k}}
\\&& = \d_{j,k}\,
    \frac{1}{2 \cos^2\big(\mfr{2\pi k}{2^{n+2}}\big) \,f^{(n+1)}_{k/2} 
    + 2 \sin^2\big(\mfr{2\pi k}{2^{n+2}}\big) \,f^{(n+1)}_{k/2+2^n} 
    + 2\g \cos\big(\mfr{2\pi k}{2^{n}}\big) + \l} \,.
\nonumber\eea
\end{enumerate}
\end{lem}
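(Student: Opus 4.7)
The plan is to prove the three parts in sequence, each building on the previous.

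For part (1), I would appeal to the standard fact that circulant matrices are simultaneously diagonalized by the Fourier matrix $U_n$: since $Z$ commutes with the cyclic shift, and the shift is normal with eigenvectors given by the columns of $U_n$, the matrix $Z$ is diagonalized by $U_n$. Computing $(U_n^* Z U_n)_{j,k}$ directly from the definitions yields $\d_{j,k}\,f^{(n)}_j$ with $f^{(n)}_j=\sum_\ell z_\ell\,\e^{2\pi\i\ell j/N}$. The spherical Laplacian $D_n$ is circulant with first row $[0,1,0,\ldots,0,1]$, since each vertex $(n,j)\in S_n$ has exactly the two cyclic neighbors $(n,j\pm1\bmod N)$; the DFT of this first row is $\e^{2\pi\i j/N}+\e^{-2\pi\i j/N}=2\cos(2\pi j/N)$.

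For part (2), I would expand $(U_n^*E_n^TU_{n+1})_{j,k}$ directly. Each $v=(n,r)\in S_n$ has children $(n+1,2r),(n+1,2r+1)\in S_{n+1}$, so $(E_n^T)_{r,\ell}=\d_{\ell,2r}+\d_{\ell,2r+1}$. Substituting the definitions of $U_n,U_{n+1}$ and factoring out the $r$-independent quantity $1+\e^{2\pi\i k/2^{n+1}}$ coming from the two children reduces the computation to a geometric sum $\sum_{r=0}^{N-1}\e^{2\pi\i r(k-j)/N}$ which equals $N$ exactly when $k\equiv j\pmod{N}$, i.e.\ $k\in\{j,j+2^n\}$, and vanishes otherwise. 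The prefactors $1/\sqrt{N}$ and $1/\sqrt{2N}$ combine with this $N$ to produce the claimed coefficient $2^{-1/2}$.

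For part (3), the idea is to conjugate each of the three summands inside the inverse by $U_n$ separately, verify that all three become diagonal, and invert entrywise. The $\g D_n$ term is handled by part (1) and contributes $2\g\cos(2\pi k/2^n)$; the $\l$ term is trivially diagonal. For the remaining piece $E_n^TZE_n$, I would set $M:=U_n^*E_n^TU_{n+1}$ and use that $U_{n+1}^*ZU_{n+1}=\mathrm{diag}(f^{(n+1)}_\ell)$ by circularity of $Z$, so that
\[
U_n^*E_n^TZE_nU_n \;=\; M\,\mathrm{diag}\bigl(f^{(n+1)}_\ell\bigr)\,M^*.
\]
By part (2), the $j$-th row of $M$ is supported only on $\ell\in\{j,j+2^n\}$, so for $j\ne k$ in $\{0,\ldots,N-1\}$ the supports of rows $j$ and $k$ are disjoint; hence the product is diagonal. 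Its $j$-th diagonal entry is $|M_{j,j}|^2 f^{(n+1)}_j+|M_{j,j+2^n}|^2 f^{(n+1)}_{j+2^n}$, and the two coefficients simplify via $|1+\e^{\i\theta}|^2=4\cos^2(\theta/2)$ together with the identity $\e^{2\pi\i(j+2^n)/2^{n+1}}=-\e^{2\pi\i j/2^{n+1}}$ into $2\cos^2(2\pi j/2^{n+2})$ and $2\sin^2(2\pi j/2^{n+2})$ respectively. Adding the three diagonal contributions and taking the entrywise reciprocal yields the stated formula.

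The main obstacle is nothing conceptual but careful bookkeeping: tracking the two different DFT sizes $N=2^n$ and $2N=2^{n+1}$, reconciling their normalization factors, and verifying that shifting the index by $2^n$ is what converts the $\cos^2$ coefficient to a $\sin^2$ coefficient. Once the factorization $M\,\mathrm{diag}(f^{(n+1)}_\ell)\,M^*$ is set up and the two-point support of the rows of $M$ is recognized, everything reduces to elementary linear algebra and trigonometric identities.
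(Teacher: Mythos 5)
Your proposal is correct and follows essentially the same route as the paper: part (1) by direct computation of $(U_n^*ZU_n)_{j,k}$, part (2) by expanding $E_n^T$ with the parent--child kernel $\d_{\ell',2\ell}+\d_{\ell',2\ell+1}$, and part (3) by inserting $U_{n+1}U_{n+1}^*$ around $Z$, using parts (1)--(2) to see that $U_n^*E_n^TZE_nU_n$ is diagonal with entries $2\cos^2\!\big(\mfr{2\pi j}{2^{n+2}}\big)f^{(n+1)}_j+2\sin^2\!\big(\mfr{2\pi j}{2^{n+2}}\big)f^{(n+1)}_{j+2^n}$, and inverting entrywise. Note that your indices $f^{(n+1)}_{k}$, $f^{(n+1)}_{k+2^n}$ agree with the paper's own computation; the indices $f^{(n+1)}_{k/2}$, $f^{(n+1)}_{k/2+2^n}$ appearing in the printed statement are an inconsistency of the paper, not of your argument.
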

\begin{proof} This is all quite easy but nevertheless ... 
\beax (U_n^* Z U_n)_{j,k} &=& N^{-1} \sum_{m,\ell=0}^{N-1} 
      \e^{-2\pi\i (jm-\ell k)/N} \, z_{(\ell-m)\mathrm{ mod }N}
\\
&=&\sum_{\ell=0}^{N-1} z_\ell \,\e^{2\pi\i \ell/N}\, N^{-1}
   \sum_{m=0}^{N-1} \e^{-2\pi\i m (j-k)/N}       
\,=\,\d_{j,k}\,f^{(n)}_j  \,.  
\eeax
In a similar vein we obtain
\beax (U_{n}^* E^T_n U_{n+1})_{j,k} &= &\sum_{\ell=0,1,\ldots, N-1, \atop
       \ell' = 0,1,\ldots, 2N-1} U^*_{j\ell}\, E^T_{\ell,\ell'} 
       \,U_{\ell' k}
\\
&=& 2^{-n-1/2} \,\sum_{\ell,\ell'} \e^{-2\pi\i j\ell/N}\,
    \big(\d_{2\ell,\ell'} + \d_{2\ell+1,\ell'}  \big) \,
    \e^{-2\pi\i\ell' k/2N} 
\\
&=& 2^{-1/2} \big(\d_{j,k} + \d_{j+N,k}\big) +
    2^{-1/2} \,\e^{2\pi\i k/2N}\,\big(\d_{j,k} + \d_{j+N,k}\big)\,.
\eeax
The third claim follows from the first two by noticing that $U_n^*(E^T_nZE_n+
\g D_n +\l)^{-1} U_n = (U_n^*E^T_nU_{n+1} U_{n+1}^* Z U_{n+1} (U_{n}^* E_n 
U_{n+1})^* + \g\, U_n^* D_n U_n + \l)^{-1}$. This shows that
\beax \lefteqn{\big(U_n^*E^T Z_{n+1} E U_n\big)_{j,k}}
\\ 
&=& \mfr12 \d_{j,k} \Big[ \big(1+\e^{2\pi\i j/2^{n+1}}\big)\, f_j^{(n+1)} \,
    \big(1+\e^{-2\pi\i j/2N}\big)  
\\
&&\phantom{\mfr12 \d_{j,k} \Big[}     
    +\,\big(1-\e^{2\pi\i j/2N}\big)\, 
    f_{j+N}^{(n+1)} \,\big(1-\e^{-2\pi\i j/2N}\big) \Big]
\\
&=&\d_{j,k} \,\Big[\big(1+\cos(\mfr{2\pi j}{2N})\big) \, f_j^{(n+1)} \,
   + \big(1-\cos(\mfr{2\pi j}{2N})\big) \, f_{j+N}^{(n+1)}\Big]
\\
&=&2 \,\d_{j,k} \Big[\cos^2(\mfr{2\pi j}{4N})\,f_j^{(n+1)}  + 
     \sin^2(\mfr{2\pi j}{4N}) f_{j+N}^{(n+1)}\Big]\,.
\eeax     
\vskip-2em
\end{proof}
\eqref{7.6} implies that if $Z\in \SH_{2N}$ is circulant with Fourier
transformation $f^{(n+1)}$ then $\Phi_n(Z)\in \SH_{N}$ is circulant with 
Fourier transformation $f^{(n)}$, and so that 
\be \label{equ: n-fixed point}
    f_k^{(n)} = - \frac{1}{2 \cos^2\big(\mfr{\pi k}{2N}\big) 
    \,f^{(n+1)}_{k/2} 
    + 2 \sin^2\big(\mfr{\pi k}{2N}\big) \,f^{(n+1)}_{k/2+N} 
    + 2\g \cos\big(\mfr{2\pi k}{N}\big) + \l} \,. 
\ee
Letting $n\to\infty$ and setting $f(\mfr{\pi k}{2N}):=f^{(n)}_k$ we obtain 
the fixed point equation,
\bea \label{equ: fixed point}
    f(\theta) &=& - \frac{1}{2 \cos^2\big(\mfr{\theta}{4}\big) 
    \,f(\mfr{\theta}{2}) + 2 \sin^2\big(\mfr{\theta}{4}\big) 
    \,f(\mfr{\theta}{2}+\pi) + 2\g \cos(\theta) + \l}\,,  
\eea
for functions $f:[0,2\pi]\to\H$. [For $\g=0$ and $\Im(\l)>0$ the only solution
to \eqref{equ: fixed point} is the constant function with value $z_\l$ from 
\eqref{fixpnt}.]

The truncated Green functions $Z=Z_n\in \SH_N$ are further restricted by the 
condition that $Z_n$ has to be symmetric (not hermitean). This implies that 
the first row, $\boldsymbol z=[z_0,z_1,\ldots, z_{2^{n}-1}]$ of $Z_n$
is symmetric with respect to the middle co-ordinate, $2^{n-1}$. 
That is, 
\be \label{Z:symm} 
    \boldsymbol z=[z_0,z_1,\ldots,z_{2^{n-1}-1},z_{2^{n-1}},z_{2^{n-1}-1},
    \ldots,z_1]
    \,.
\ee
Therefore, $f^{(n+1)}_{2^n-k} = f^{(n+1)}_{2^n+k}$ and consequently, 
$f(\pi-\theta) = f(\pi+\theta)$.    

\bigskip
The only place where we are able to evaluate the solution of 
\eqref{equ: fixed point} explicitly is for $\theta\in\{0,2\pi\}$, where we 
find that $G_\l(0,0) = f(0) = f(2\pi) = \frac{2\g+\l}4 + \frac{\i}{4}
\sqrt{8-(2\g+\l)^2}$. For $f(0)$ to be in $\H$ we get the condition that 
$|2\g+\l|< 2\sqrt{2}$. Therefore, $[-2\g-2\sqrt{2},-2\g+2\sqrt{2}]$ is in the ac
spectrum of $\D_\g$. On the other hand, let $\phi_n\in\ell^2(V)$ with
$\phi_n(v):=2^{-n/2}$ for $v\in S_n$ and zero otherwise. Then, the variational
energy, $\langle \phi_n, \D_\g \phi_n\rangle = 2\g$. So for large $\g$, this
energy is outside the interval $[-2\g-2\sqrt{2},-2\g+2\sqrt{2}]$ and thus, 
unlike for $\g=0$, $f(0)$ does not alone determine the full spectrum.

\subsection{Mean-field loop model}\label{sect:mean-field}

Now we add a weighted complete graph to every sphere in the binary tree. 
Since the weights are chosen to make the total added weights the same in each 
sphere, this is a sort of mean-field model. Pick a number $\gamma >0$. Each 
added edge (dotted line in the figure below) in the $n$-th sphere $S_n$ is 
given the weight $\gamma 2^{-n}$. That is, we define the adjacency matrix, 
$\D_\g^{\mathrm{mf}}$, through the kernel
\be \D_\g^{\mathrm{mf}}(v,w):=\left\{\begin{array}{cl}1&\mbox{ if }(v,w)\in 
E\\\g2^{-n}&\mbox{ if }v,w\in S_n\\0&\mbox{ else }\end{array}\right.\,.
\ee
\begin{figure}[h]
  \centering
  \includegraphics[width=0.3\textwidth]{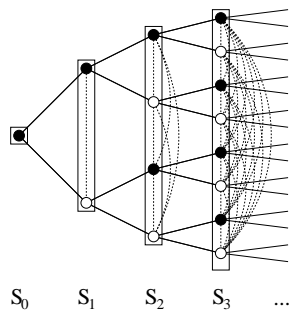}
  \caption{Rooted binary tree with mean-field edges insides spheres and 
transversely 2-periodic potential.}
  \label{fig:}
\end{figure}
We call the new (weighted) graph the mean-field binary tree. The spectrum of 
the mean-field adjacency matrix, $\D_\g^{\mathrm{mf}}$, is the union of two 
intervals $[-2\sqrt{2}+\gamma,2\sqrt{2}+\gamma]\cup[-2\sqrt{2},2\sqrt{2}]$ and 
is purely ac. This can be seen by using a Haar basis~\cite{FHS3}.

For simplicity, we considered a random potential that is transversely 
two-periodic and defined by the product of two independent Bernoulli 
measures for $q_1$ and $q_2$,
\be
d\nu(q_1,q_2) = \frac14\big(\delta(q_1-1) + \delta(q_1+1)\big)\big(\delta(q_2-1) 
+ \delta(q_2+1)\big)\,.
\ee
Then we have the following theorem.
\begin{thm}[\!\!\cite{FHS3}, Theorem 9]
\label{meanfield} Let $\nu_{(0)}$ be a probability measure of 
bounded support for the potential at the root and $\nu$ be the product of 
Bernoulli measures defined above and let $H_{a,\gamma}:=-
\D_\g^{\mathrm{mf}} + a\,q$ be the random discrete Schr\"odinger operator on 
the mean-field binary tree corresponding to the transversely two-periodic 
potential defined by the scaled distribution $\nu_a$ and weight $\gamma$. 
There exist $0 < \lambda_0, \lambda_1 < 2\sqrt{2}$ such that for sufficiently 
small $a$ the spectral measure for $H_a$ corresponding to $\delta_0$ has 
purely ac spectrum in $\{\l: |\l|\le \l_0,|\l - \gamma|\le\l_1\}$.
\end{thm}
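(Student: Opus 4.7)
The plan is to adapt the proof of Theorem \ref{twoperiodic} by exploiting the Haar basis decomposition of $\Delta_\g^{\mathrm{mf}}$ that was already used to identify the two bands $[-2\sqrt{2}+\g,2\sqrt{2}+\g]\cup[-2\sqrt{2},2\sqrt{2}]$. This decomposition splits $\ell^2(V)$ into invariant subspaces of $-\Delta_\g^{\mathrm{mf}}$: one radially symmetric sector on which the mean-field term acts as multiplication by (approximately) $\g$, and its orthogonal complement on which the mean-field contribution $\g\,2^{-n}(J-I)$ restricted to sphere $S_n$ has norm $O(\g 2^{-n})$ and so disappears in the limit $n\to\infty$. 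The 2-periodic Bernoulli potential is well-adapted to this decomposition: since $q$ takes only two values per sphere, multiplication by $q$ mixes the two sectors in a finite-dimensional way per level, and the recursion \eqref{eq23} for the truncated Green function can be reformulated as a map between small-dimensional Siegel half-spaces that carries both the random data $\boldsymbol q=(q_1,q_2)$ and the deterministic mean-field term that decays like $\g 2^{-n}$.

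Next, I would perform the fixed-point analysis of the unperturbed ($a=0$) recursion. In the radial sector the recursion is essentially one-dimensional and, after absorbing the $\g$-shift, its fixed point is the analog of \eqref{fixpnt} centered at $\lambda-\g$, yielding a point in $\H$ whenever $|\lambda-\g|<2\sqrt 2$. In the orthogonal sector the mean-field contribution vanishes asymptotically, so the relevant fixed point is the $z_\lambda$ of \eqref{fixpnt} with $|\lambda|<2\sqrt 2$. These two fixed points play the role that $z_\lambda$ plays in the weight function $\mathrm{cd}$ of \eqref{def:cd2}, and a matrix-valued version $\mathrm{cd}(Z):=\|Y^{-1/2}(Z-Z_\lambda)Y^{-1/2}\|^2$ on the relevant Siegel half-space (as in \eqref{4.10}) would be used, with one of the two reference points depending on whether we target the band near $0$ or near $\g$.

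Following the same scheme as in the proof sketches of Theorems \ref{ext states conj} and \ref{twoperiodic}, I would define the moment $M_p(\rho_a):=\int \mathrm{cd}^p(Z)\,d\rho_a(Z)$ of the distribution $\rho_a$ of the truncated Green function, and prove the uniform bound $\limsup_{\Im(\lambda)\downarrow 0}\sup_\lambda M_p(\rho_a)<\infty$. Applying the recursion to $M_p$ gives an expression of the form
\[
M_p(\rho_a)=\int \mu(Z,\boldsymbol q,\lambda)\,\bigl(\tfrac12\mathrm{cd}^p(Z_1)+\tfrac12\mathrm{cd}^p(Z_2)\bigr)\,d\rho_a(Z_1)\,d\rho_a(Z_2)\,d\nu_a(\boldsymbol q),
\]
so that the conclusion will follow once one shows $\int \mu(Z,\boldsymbol q,\lambda)\,d\nu_a(\boldsymbol q)\le 1-\mu_0<1$ for $Z$ near the boundary of the Siegel half-space, splitting the integration against $\rho_a$ into a bounded region (where $\mathrm{cd}^p$ is controlled) and its complement (where the strict contraction applies), exactly as in the closing display of Section \ref{tree}.

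The main obstacle is the contraction estimate for the averaged rate function. As emphasized at the start of Section \ref{strongly correlated}, $\mu(Z,\boldsymbol 0,\lambda)=1$ on the boundary stratum in this strongly correlated setting, so the simple semi-continuity argument of Theorem \ref{ext states conj} is not available; one must Taylor-expand $\mu$ to second order in $a\boldsymbol q$ and exploit the cancellations present in the Bernoulli measure to extract a negative correction of size $-c\,a^2$ from the $q^2$ terms. Here the Bernoulli product structure makes $c_{12}=0$, so the correlation condition $\delta<1/2$ of \eqref{nuassn2} is trivially satisfied and $c=c_{11}+c_{22}>0$; the genuinely new technical point is to secure this cancellation \emph{uniformly in the level} $n$ despite the decaying mean-field perturbation $\g 2^{-n}$, and to do so separately near each of the two spectral centers $\lambda\approx 0$ and $\lambda\approx\g$, which is what produces the two radii $\lambda_0,\lambda_1$ in the statement.
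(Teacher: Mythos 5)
Your overall strategy -- a moment bound $M_p$ for the distribution of the truncated Green function with a $\mathrm{cd}$-type weight, the recursion \eqref{eq23}, and an \emph{averaged} contraction rate whose boundedness below $1$ comes from second-order cancellations in $\boldsymbol q$ rather than from semicontinuity -- is indeed the method this survey attributes to \cite{FHS3} (the paper itself gives no proof of Theorem \ref{meanfield}, only the statement and the remark on $\lambda_0,\lambda_1$), and your use of the Haar decomposition to identify the $\gamma$-shifted radial band and the unshifted bands is correct, as is the observation that the Bernoulli product makes $c_{12}=0$. However, two essential points are missing or wrong. First, the reduction of the matrix-valued recursion to a map on a \emph{fixed} low-dimensional domain is asserted, not established. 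On the radial direction the mean-field term does not decay at all: $\gamma 2^{-n}J$ acts as exactly $\gamma$ on the constant vector of $S_n$ for every $n$, and the transversely $2$-periodic potential couples that constant vector to the finest-scale alternating Haar vectors at every level, so $-\D_\g^{\mathrm{mf}}+a\,q$ does not respect the Haar decomposition. Before any fixed-point or contraction analysis can begin one must exhibit the invariant finite-parameter family of truncated Green matrices (coming from the subgroup of tree automorphisms preserving the $2$-periodic pattern and the mean-field term) and write the induced recursion explicitly; this is the technical heart of the result and your sketch supplies no argument for it.

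Second, you misread the conclusion, and this affects the architecture of the proof. The region $\{\lambda: |\lambda|\le\lambda_0,\ |\lambda-\gamma|\le\lambda_1\}$ is the \emph{intersection} of the two constraints, not two separate neighbourhoods of the band centers $0$ and $\gamma$. Because the potential couples the $\gamma$-shifted (radial) component to the unshifted components, the root Green function is determined by the coupled recursion, and the weight function must reference \emph{both} fixed points simultaneously (a $z_{\lambda-\gamma}$-type point for the radial variable and $z_\lambda$ as in \eqref{fixpnt} for the others); the requirement that both lie uniformly inside $\H$ is precisely what forces $\lambda$ to satisfy both inequalities at once, with $\lambda_0$ the constant of Theorem \ref{twoperiodic} and $\lambda_1$ any number below $2\sqrt 2$. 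Your plan to use ``one of the two reference points depending on which band we target'' cannot bound the moment of the coupled system and would at best suggest a union-type statement, which is not what is claimed. Relatedly, the needed cancellation estimate must be carried out for the coupled rate function, whose unperturbed value equals $1$ on a larger boundary set than in Theorem \ref{twoperiodic}; asserting that the same $-c\,a^2$ correction appears is not a substitute for that computation.
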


In this theorem, the constant $\lambda_0$ has the same value as in Theorem
\ref{twoperiodic}, while $\lambda_1$ can be taken to be any positive number 
less than $2\sqrt{2}$.

\subsection*{Acknowledgment}
WS is indepted to Florian Sobieczky for organizing the wonderful Alp--workshop 
in St. Kathrein. We are also grateful to the referee for many useful comments.
\end{document}